\documentclass[twoside]{article}

\usepackage[accepted]{aistats2020}

% If you use natbib package, activate the following three lines:
\usepackage[round]{natbib}
\usepackage{graphicx}
\bibliographystyle{plainnat}

\usepackage[utf8]{inputenc} % allow utf-8 input
\usepackage[T1]{fontenc}    % use 8-bit T1 fonts
\usepackage{hyperref}       % hyperlinks
\usepackage{url}            % simple URL typesetting
\usepackage{booktabs}       % professional-quality tables
\usepackage{amsfonts}       % blackboard math symbols
\usepackage{nicefrac}       % compact symbols for 1/2, etc.
\usepackage{graphicx}
\usepackage{subcaption}
\usepackage{chngcntr}

\usepackage{xargs}                      % Use more than one optional parameter in a new commands
\usepackage[pdftex,dvipsnames]{xcolor}  % Coloured text etc.
\usepackage[colorinlistoftodos,prependcaption,textsize=tiny]{todonotes}

\usepackage{amsmath, amsthm, amssymb, bbm}
\usepackage{color}
\usepackage{algorithmic,algorithm}

\newtheorem{theorem}{Theorem}[section]

\newtheorem{lemma}[theorem]{Lemma}
\newtheorem{condition}[theorem]{Condition}

\theoremstyle{remark}
\newtheorem{remark}{Remark}[section]

\def\M2{MTP$_2$}
\def\bR{\mathbf{R}}
\def\R{\mathbb{R}}
\def\bP{\mathbf{P}}
\def\E{\mathcal{E}}
\def\TP{\text{TP}}
\def\TN{\text{TN}}
\def\FP{\text{FP}}
\def\FN{\text{FN}}
\newcommand\indep{\protect\mathpalette{\protect\independenT}{\perp}}

\newcommand{\brh}{\boldsymbol{\rho}}
\newcommand{\bL}{\mathbf{L}}

\newcommandx{\addition}[2][1=]{\todo[linecolor=Plum,backgroundcolor=Plum!25,bordercolor=Plum,#1]{#2}}
\def\independenT#1#2{\mathrel{\rlap{$#1#2$}\mkern2mu{#1#2}}}

\newcommand{\hatOmega}{\widehat{\Omega}}
\DeclareMathOperator{\tr}{trace}
\def\adj{\textrm{adj}}

% The \author macro works with any number of authors. There are two commands
% used to separate the names and addresses of multiple authors: \And and \AND.
%
% Using \And between authors leaves it to LaTeX to determine where to break the
% lines. Using \AND forces a line break at that point. So, if LaTeX puts 3 of 4
% authors names on the first line, and the last on the second line, try using
% \AND instead of \And before the third author name.

\runningtitle{Learning High-dimensional GGMs under \M2 without Adjustment of Tuning Parameters}

\begin{document}

\twocolumn[

\aistatstitle{Learning High-dimensional Gaussian Graphical Models \\ under Total Positivity without Adjustment of Tuning Parameters}

\aistatsauthor{ Yuhao Wang \And Uma Roy \And  Caroline Uhler}

\aistatsaddress{ 
University of Cambridge \\ 
\texttt{yw505@cam.ac.uk} 
\And  
Google Research\\
\texttt{uma.roy.us@gmail.com} 
\And  
Massachusetts Institute of Technology\\
\texttt{cuhler@mit.edu }} ]

\begin{abstract}
We consider the problem of estimating an undirected Gaussian graphical model when the underlying distribution is multivariate totally positive of order 2 (\M2), a strong form of positive dependence. Such distributions are relevant for example for portfolio selection, since assets are usually positively dependent. A large body of methods have been proposed for learning undirected graphical models without the \M2 constraint. A major limitation of these methods is that their structure recovery guarantees in the high-dimensional setting usually require a particular choice of a tuning parameter, which is unknown a priori in real world applications. %We here show that an undirected graphical model under \M2 can be learned consistently without any tuning parameters. 
We here propose a new method to estimate the underlying undirected graphical model under \M2 and show that it is provably consistent in structure recovery without adjusting the tuning parameters. This is achieved by a constraint-based estimator that infers the structure of the underlying graphical model by testing the signs of the empirical partial correlation coefficients. We evaluate the performance of our estimator in simulations and on financial data.
\end{abstract}

\section{Introduction}
\label{sec:intro}

Gaining insights into complex phenomena often requires characterizing the relationships among a large number of variables. Gaussian graphical models offer a powerful framework for representing high-dimensional distributions by capturing the conditional dependencies between the variables of interest in the form of a network. These models have been extensively used in a wide variety of domains ranging from speech recognition~\citep{johnson2012mathematical} to genomics~\citep{kishino2000correspondence} and finance~\citep{wang2011dynamic}.

In this paper we consider the problem of learning a Gaussian graphical model under the constraint that the distribution is multivariate totally positive of order 2 (\M2), or equivalently, that all partial correlations are non-negative. Such models are also known as attractive Gaussian random fields. \M2 was first studied in~\citep{B82,FKG71,KarlinRinott80,KR83} and later also in the context of graphical models~\citep{Fallat17,LUZ17}. \M2 is a strong form of positive dependence, which is relevant for modeling in various applications including phylogenetics or portfolio selection, where the shared ancestry or latent global market variable often lead to positive dependence among the observed variables~\citep{MS05,Zwiernik15}.

Due to the explosion of data where the number of variables $p$ is comparable to or larger than the number of samples $N$,  the problem of learning undirected Gaussian graphical models in the high-dimensional setting has been a central topic in machine learning, statistics and optimization. There are two main classes of algorithms for structure estimation for Gaussian graphical models in the high-dimensional setting. A first class of algorithms attempts to explicitly recover which edges exist in the graphical model, for example using conditional independence tests~\citep{Anand12,ST18} or neighborhood selection~\citep{MB06}. A second class of algorithms instead focuses on estimating the precision matrix. %, rather than explicitly determining whether certain edges exist in the graphical model. 
The most prominent of these algorithms is \emph{graphical lasso}~\citep{BGA08,FHT08,RWRY11,YL07}, which applies an $\ell_1$ penalty to the log-likelihood function to estimate the precision matrix. Other algorithms include linear programming based approaches such as graphical Dantzig~\citep{Yuan10} and CLIME~\citep{CLL11,CLZ16}; optimization with non-convex penalties like~\citep{FFW09,LF09,LW17}; as well as greedy methods like~\citep{JJR12,SPZ12}. 

The main limitation of all aforementioned approaches is the requirement of a specific tuning parameter to obtain consistency guarantees in estimating the edges of the underlying graphical model. In most real-world applications, the correct tuning parameter is unknown and difficult to discover. To make the estimate less sensitive to misspecification of tuning parameters,~\citet{LWang17} and~\citet{SZ13} proposed estimating high-dimensional precision matrices using square-root lasso~\citep{BCW11} and scaled lasso~\citep{SZ12} respectively. These estimators have the advantage that their theoretical guarantees do not rely on an unknown tuning parameter, thereby allowing them to consistently estimate precision matrices without tuning parameter adjustment. While the estimated precision matrices from these methods are guaranteed to converge to the true precision matrix, the zero patterns of the estimated matrices are not guaranteed to recover the underlying graph.

The algorithms described above are for learning the underlying undirected graph in \emph{general} Gaussian models. In this paper, we consider the special setting of \M2 Gaussian models. Several algorithms have been proposed  %specifically learning Gaussian graphical models under the \M2 constraint. These algorithms 
that are able to exploit the additional structure imposed by \M2 %~\cite{LUZ17,SH15}. This allows 
with the goal of obtaining stronger results than for general Gaussian graphical models. %  in the high-dimensional setting. 
In particular,~\citet{LUZ17} showed that the MLE exists whenever the sample size $N > 2$ (independent of the number of variables $p$), which is striking given that $N > p$ is required for the MLE to exist in general Gaussian graphical models. %However, the MLE is not a consistent estimator for learning the structure of the graph. 
Since the MLE under \M2 is not a consistent estimator for the structure of the graph~\citep{SH15},~\citet{SH15} considered applying thresholding to entries in the MLE, but this procedure requires a tuning parameter and does not have consistency guarantees.

%The three main contributions of this paper are: 1) we provide a new algorithm for learning Gaussian graphical models under \M2  that is based on conditional independence testing and does not require a tuning parameter; 2) we prove that this algorithm is consistent in the high-dimensional setting without the need of a particular choice of tuning parameter; 3) we show that our algorithm compares favorably to other methods for learning graphical models on both simulated data and financial data.

The three main contributions of this paper are: 
\begin{enumerate}
\item[1)] we provide a new algorithm for learning Gaussian graphical models under \M2  that is based on conditional independence testing; 
\item[2)] we prove that this algorithm does not require adjusting any tuning parameters for the theoretical consistency guarantees in structure recovery; 
\item[3)] we show that our algorithm compares favorably to other methods for learning graphical models on both simulated data and financial data.
\end{enumerate}

\section{Preliminaries and Related Work}
\label{sec:pre}

\paragraph{Gaussian graphical models:} Given a graph $G = ([p], \E)$ with vertex set $[p] = \{1, \cdots, p\}$ and edge set $\E$ we associate to each node $i$ in $G$ a random variable $X_i$. A distribution $\bP$ on the nodes $[p]$ forms an \emph{undirected graphical model} with respect to $G$ if 
\begin{equation}
    \label{eq_pair}
X_i\indep X_j\mid X_{[p]\setminus\{i,j\}} \quad \textrm{for all } (i,j)\notin E.
\end{equation}
% is in the edge set $\E$ whenever $X_i$ is conditionally independent from $X_j$ conditioning on all the other random variables $X_{[p] \setminus \{i,j\}}$. 
%We call such an undirected graph a conditional independence graph (CIG), or undirected graphical model. 
When $\bP$ is Gaussian with mean zero, covariance matrix $\Sigma$ and precision matrix $\Theta:=\Sigma^{-1}$, the setting we concentrate on in this paper, then (\ref{eq_pair}) is equivalent to  $\Theta_{ij}= 0$ for all $(i,j)\notin E$. By the Hammersley-Clifford Theorem, for strictly positive densities such as the Gaussian, (\ref{eq_pair}) is equivalent to 
\begin{equation*}
X_i\indep X_j\mid X_S \quad \textrm{for all } S\subseteq[p]\setminus\{i,j\} \textrm{ that separate } i,j,
\end{equation*}
where $i,j$ are separated by $S$ in $G$ when $i$ and $j$ are in different connected components of $G$ after removing the nodes $S$ from $G$. 
In the Gaussian setting, $X_i\indep X_j\mid X_S$ if and only if the corresponding \emph{partial correlation coefficient} $\rho_{ij \mid S}$ is zero, which can be calculated from submatrices of $\Sigma$, namely
\begin{align*}
    \rho_{ij \mid S} = & - \frac{((\Sigma_{M,M})^{-1})_{i, j}}{\sqrt{((\Sigma_{M,M})^{-1})_{i, i} ((\Sigma_{M,M})^{-1})_{j, j}}}, \\
     & \textrm{where} M=S\cup\{i,j\}. 
\end{align*}

\paragraph{\M2 distributions:} 
%In this paper, we consider Gaussian graphical models under \M2. For general distributions \M2 was first studied by \cite{FKG71,KarlinRinott80}; 
A density function $f$ on $\mathbb{R}^p$ is \M2 if 
$$ f(x) f(y) \leq f(x \wedge y) f(x \vee y) \quad \textrm{for all } x, y \in \mathbb{R}^p,$$
where $\vee, \wedge$ denote the coordinate-wise minimum and maximum respectively~\citep{FKG71,KarlinRinott80}. %While \M2 is defined for general density functions, in our case we are interested in the Gaussian distribution. 
In particular, a Gaussian distribution is \M2 if and only if its precision matrix $\Theta$ is an $M$-matrix, i.e. $\Theta_{ij} \leq 0$ for all $i \neq j$~\citep{B82,KR83}. This implies that all partial correlation coefficients are non-negative, i.e., $\rho_{ij \mid S} \geq 0$ for all $i, j, S$~\citep{KR83}. In addition, for \M2 distributions it holds that $X_i \indep X_j \mid X_S$ if and only if $i,j$ are separated in $G$ given $S$~\citep{Fallat17}. %  is always faithful with respect to its concentration graph $G$ defined by the sparsity pattern of the precision matrix $\Theta$~\cite{Fallat17}; in particular, $X_i \indep X_j \mid X_S$ if and only if $i,j$ are separated in $G$ given $S$. Since all partial correlations under \M2 are non-negative, 
Hence $i,j$ are connected in $G$ given $S$ if and only if %the corresponding partial correlation coefficient is strictly positive, i.e., 
$\rho_{ij \mid S} > 0$.

\M2 distributions are relevant for various applications. In particular, Gaussian tree models with latent variables are \M2 up to sign~\citep{LUZ17}; this includes the important class of single factor analysis models. As an example, in~\citep{SH15} \M2 was used for data measuring students' performance on different math subjects, an application where a factor analysis model with a single latent factor measuring general mathematical ability seems fitting. In addition, factor analysis models are used frequently in psychology and finance; the \M2 constraint has been applied to a dataset from psychology in~\citep{LUZ17} and auctions in~\citep{HLP12}. %These examples indicate that \M2 to be a compelling constraint for applications where latent tree models are standardly used.} 
\M2 was also used in the modelling of global stock prices, motivated by the fact that asset price changes are usually positively correlated~\citep{ARU19}; in particular, the authors reported that the correlation matrix of the daily returns of $5$ global stocks is an inverse M-matrix~\citep[Figure~1]{ARU19}. In the same paper, the authors also showed that using a covariance matrix among stocks estimated under \M2 achieves better performance at portfolio selection than other state-of-the-art methods.

\paragraph{Algorithms for learning Gaussian graphical models:} An algorithm is called \emph{consistent} if the estimated graph converges to the true graph $G$ as the sample size $N$ goes to infinity. \emph{CMIT}, an algorithm proposed in~\citep{Anand12}, is most related to the approach in this paper.  Starting in the complete graph, edge $(i,j)$ is removed if there exists $S\subseteq [p]\setminus\{i,j\}$ with $|S| \leq \eta$ (for a tuning parameter $\eta$ that represents the maximum degree of the underlying graph) such that the corresponding empirical partial correlation coefficient satisfies $|\hat{\rho}_{ij \mid S}| \leq \lambda_{N,p}$. For consistent estimation, the tuning parameter $\lambda_{N,p}$ needs to be selected carefully depending on the sample size $N$ and number of nodes $p$. Intuitively, if $(i,j) \notin G$, then $\rho_{ij | S} = 0$ for all $S$ that separate $(i,j)$. Since $\hat{\rho}_{ij \mid S}$ concentrates around $\rho_{ij \mid S}$, it holds with high probability that there exists $S\subseteq [p]\setminus\{i,j\}$ for which $|\hat{\rho}_{ij \mid S}|\leq \lambda_{N,p}$, % is sufficiently close to $0$ such that its absolute value is within the threshold $\lambda_{N,p}$, 
so that edge $(i,j)$ is removed from $G$. Other estimators such as graphical lasso~\citep{RWRY11} and neighborhood selection~\citep{MB06} also require a tuning parameter: $\lambda_{N,p}$ represents the coefficient of the $\ell_1$ penalty and critically depends on $N$ and $p$ for consistent estimation. %, and in neighborhood selection, which performs a series of node-wise lassos, it represents the coefficient of the $\ell_1$ lasso penalty. 
Finally, with respect to estimation specifically under the \M2 constraint, the authors in~\citep{SH15} propose thresholding the MLE $\hatOmega$ of the precision matrix, which can be obtained by solving the following convex optimization problem:
%show that the maximum likelihood estimator (MLE) for the Gaussian graphical model under the \M2 constraint exists even in the high-dimensional setting (which is unlike the unconstrained setting). Furthermore, they show  that thresholding the MLE results empirically in effective structure recovery of the graphical model. In particular, the algorithm provided in ~\citep{SH15} minimizes the following objective 
%
\begin{equation} \label{eq:SH_obj}
\hatOmega := \min_{\Omega \succeq 0, \; \Omega_{ij} \leq 0\; \forall i \neq j} - \log \det (\Omega) + \tr (\Omega \hat{\Sigma}),
\end{equation}
%
%The resulting estimator $\hatOmega$ is subsequently thresholded at level $t$.  Since this resulting precision matrix is not guaranteed to be positive semi-definite, one needs to refit $\hatOmega$ to the objective function in \ref{eq:SH_obj} with additional constraints enforcing the inferred zero pattern. 
where $\hat{\Sigma}$ is the sample covariance matrix. The threshold quantile $q$ is a tuning parameter, and apart from empirical evidence that thresholding works well, there are no known theoretical consistency guarantees for this procedure.

In addition to relying on a specific tuning parameter for consistent estimation, existing estimators require additional conditions with respect to the underlying distribution. The consistency guarantees of graphical lasso~\citep{RWRY11} and moment matching approaches such as CLIME~\citep{CLL11} require that the diagonal elements of $\Sigma$ are upper bounded by a constant and that the minimum edge weight
$
\min_{i\neq j, \Theta_{ij} \neq 0} |\Theta_{ij}| \geq C \sqrt{\log (p)/N}
$
for some positive constant $C$. Consistency of CMIT~\citep{Anand12} also requires the minimum edge weight condition. Consistency of CLIME  requires a bounded matrix $L_1$ norm of the precision matrix $\Theta$, which implies that all diagonal elements of $\Theta$ are bounded.

\paragraph{Learning a precision matrix without adjusting any tuning parameters:} Another recent line of work similar to ours considers estimating high-dimensional Gaussian precision matrices without the tuning of parameters. The most prominent such approach is TIGER~\citep{LWang17} and related works include scaled and organic lasso~\citep{SZ12,YB19}. %, which is used for precision matrix estimation~\cite{SZ13}. 
These estimators have the desirable property that the estimated precision matrix $\hat{\Theta}$ is guaranteed to converge to the true $\Theta$ without requiring any adjustment of the regularization parameter. However, the  support of the estimated $\hat{\Theta}$ is not guaranteed to converge to the underlying graph $G$ (see e.g. Theorem~4.3 of~\citep{LWang17}), which is the particular task we are interested in this paper.

\section{Algorithm and Consistency Guarantees}
\label{sec:alg}

%\subsection{Main Results}

%\paragraph{Algorithm Overview}

Algorithm~\ref{alg:mtp2} is our proposed procedure for learning a Gaussian graphical model under the \M2 constraint. In the following, we first describe Algorithm~\ref{alg:mtp2} in detail and then prove its consistency without the need of performing any adjustment of tuning parameters.

\begin{algorithm*}[!t]
	\caption{Structure learning under total positivity}
	\label{alg:mtp2}
	\textbf{Input:} Matrix of observations $\hat{X} \in \bR^{N \times p}$ with sample size $N$ on $p$ nodes. \\
	\textbf{Output:} Estimated graph $\hat{G}$.
	\begin{algorithmic}[1]
		\STATE Set $\hat{G}$ as the completely connected graph over the vertex set $[p]$; set $\ell := -1$;
		\REPEAT
		\STATE set $\ell = \ell + 1$;
		\REPEAT
		\STATE select a (new) ordered pair $(i,j)$ that are adjacent in $\hat{G}$ and such that $|\adj_i(\hat{G}) \setminus \{j\}| \geq \ell$;
		\REPEAT
		\STATE choose a (new) subset $S \subseteq \adj_i(\hat{G}) \setminus \{j\}$ with $|S| = \ell$ and then choose a (new) node $k \in [p] \setminus S \cup \{i,j\}$;
		\STATE calculate the empirical partial coefficient $\hat{\rho}_{ij \mid S \cup \{k\}}$ using randomly drawn data with batch size $M := N^\gamma$; if $\hat{\rho}_{ij \mid S \cup \{k\}} < 0$, delete $i - j$ from $\hat{G}$;
		\UNTIL{edge $i - j$ is deleted from $\hat{G}$ or all $S$ and $k$ are considered;}
		\UNTIL{all ordered pairs $i,j$ that are adjacent in $\hat{G}$ with $|\adj_i(\hat{G}) \setminus \{j\}| \geq \ell$ are considered;}
		\UNTIL{for each $i,j$, $\adj_i(\hat{G}) \setminus \{j\} < \ell$.}
	\end{algorithmic}
\end{algorithm*}

%\paragraph{Algorithm Description}

 Similar to CMIT~\citep{Anand12}, Algorithm~\ref{alg:mtp2} starts with the fully connected graph $\hat{G}$ and sequentially removes edges based on conditional independence tests. The algorithm iterates with respect to a parameter $\ell$ that starts at $\ell = 0$. In each iteration, for all pairs of nodes $i, j$ such that the edge $(i,j) \in \hat{G}$ and node $i$ has at least $\ell$ neighbors (denoted by $\adj_i(\hat{G})$), the algorithm considers all combinations of subsets $S$ of $\adj_i(\hat{G})$ excluding $j$ that have size $\ell$ and all nodes $k \neq i,j $ that are not in $S$. For each combination of subset $S$ and node $k$, it calculates the empirical partial correlation coefficient $\hat{\rho}_{ij \mid S \cup \{ k \}}$. % (i.e. the partial correlation between nodes $i, j$ given the conditioning set $S \cup \{ k \}$). 
 Importantly, $\hat{\rho}_{ij \mid S \cup \{ k \}}$ %the empirical partial correlation coefficient 
 is calculated only on a \emph{subset} (which we refer to as a \emph{batch}) of size $M := N^\gamma$ that we draw randomly from the $N$ samples. %In particular, we choose $M := N^\gamma$ such that for every empirical partial correlation coefficient computation, we compute it on an i.i.d sub-sampled batch of size $M$ from the entire data set with $N$ samples. 
 If any of these empirical partial correlation coefficients are negative, then edge $i - j$ is deleted from $\hat G$ (and no further tests are performed on $(i,j)$). Each iteration of the algorithm increases $\ell$ by~1 and the algorithm terminates when for all nodes $i,j$ such that $(i,j) \in \hat{G}$, the neighborhood of $i$ excluding $j$ has size strictly less than $\ell$. %pseudocode implementation of our algorithm is shown in Algorithm~\ref{alg:mtp2}.

The basic intuition behind Algorithm~\ref{alg:mtp2} is that if there is an edge $i-j$ in $G$, then all partial correlations $\rho_{ij \mid S}$ are positive because of the basic properties of \M2. In the limit of large $N$, this implies that all $\hat{\rho}_{ij \mid S}$ are positive. On the other hand, when $i$ and $j$ are not connected in the true underlying graph, then there exists a list of conditioning sets $S_1, \cdots, S_K$ such that $\rho_{ij \mid S_k}=0$ for all $1\leq k\leq K$. When $K$ is large enough, then intuitively there should exist $1\leq k\leq K$ such that $\hat{\rho}_{ij \mid S_k} < 0$ with high probability. However, since for overlapping conditioning sets the empirical partial correlations are highly correlated, we use separate batches of data for their estimation. %  \textcolor{blue}{(Note that this usually does not hold when the $\hat{\rho}_{ij \mid S_k}$'s are calculated on the full data. Since in this scenario, the $\hat{\rho}_{ij \mid S_k}$'s are usually positively dependent with each other. As a consequence, the probability that all the $\hat{\rho}_{ij \mid S_k}$'s are positive could become very high).} 
This leads to a procedure for learning the underlying Gaussian graphical model by deleting edges based on the signs of empirical partial correlation coefficients.

Having provided the high level intuition behind Algorithm~\ref{alg:mtp2}, we now prove its consistency under common assumptions on the underlying data generating process. Let $d$ denote the maximum degree of the true underlying graph $G$. For any positive semidefinite matrix $A$, let $\lambda_{\min}(A)$ and $\lambda_{\max}(A)$ denote the minimum and maximum eigenvalues of $A$ respectively. %Our consistency guarantees rely on the following assumptions:
\begin{condition}\label{cd:eigen}
There exist positive constants $\sigma_{\min}$ and $\sigma_{\max}$ such that for any subset of nodes $S \subseteq [p]$ with $|S| \leq d + 4$, the true underlying covariance matrix satisfies
\begin{align*}
    \lambda_{\min}(\Sigma_S) \geq \sigma_{\min} \quad\textrm{and}\quad \lambda_{\max}(\Sigma_S) \leq \sigma_{\max}.
\end{align*}
\end{condition}

Note that since $\lambda_{\max}(\Sigma_S) \leq \tr(\Sigma_S)$ and $|S| \leq d + 4$, it is straightforward to show that a sufficient condition for $\lambda_{\max}(\Sigma_S)\leq\sigma_{\max}$ is that all diagonal entries of $\Sigma$ scale as a constant. This condition is also required by many existing methods including graphical lasso and CLIME; see Section~\ref{sec:pre}. 

Similarly, a sufficient condition for $\lambda_{\min}(\Sigma_S)\geq \sigma_{\min}$ is that all diagonal entries of $\Theta$ scale as a constant (see the Supplementary Material for a proof); this assumption is also required by CLIME. %In summary, Condition~\ref{cd:eigen} required by our algorithm is similar to existing methods. However, the consistency guarantees of these methods also require the selection of specific tuning parameters, while our algorithm does have such requirement for consistent estimation.

\begin{condition}\label{cd:signal}
There exists a positive constant $c_\rho$ such that for any two nodes $i, j \in [p]$, if $(i,j) \in G$, then $\rho_{i,j \mid [p] \setminus \{i,j\}} \geq c_\rho \sqrt{(\log p)/(N^{3 / 4})}$.
\end{condition}
Condition~\ref{cd:signal} is a standard condition for controlling the minimum edge weight in $G$ as required, for example, by graphical lasso. %By assuming that all diagonal entries of $\Theta$ scale as a constant, Condition~\ref{cd:signal} is the same as the minimum edge weight condition for graphical lasso. 
While the minimum threshold in our condition scales as $\sqrt{(\log p)/(N^{3 / 4})}$, graphical lasso only requires $\sqrt{(\log p)/N}$ (but instead requires a particular choice of tuning parameter and the incoherence condition). %Both thresholds converge to zero as $N, p \to \infty$.

\begin{condition}\label{cd:size}
The size of $p$ satisfies that $p \geq N^{\frac{1}{8}} + d + 2$.
\end{condition}
Condition~\ref{cd:size} implies that the high-dimensional consistency guarantees of Algorithm~\ref{alg:mtp2} cannot be directly generalized to the low-dimensional setting where $p$ scales as a constant. 
%Having stated the necessary assumptions required for the consistency of our algorithm, 
We now provide the main result of our paper, namely consistency of Algorithm~\ref{alg:mtp2}. %that if the underlying Gaussian graphical model and data satisfies the conditions stated above, then  with high probability the graph recovered by Algorithm~\ref{alg:mtp2} is the same as the true underlying graph $G$. 

\begin{theorem} \label{thm:main}
Assume that the maximum neighbourhood size $d$ scales as a constant and let Conditions~\ref{cd:eigen}-\ref{cd:size} be satisfied with $c_\rho$ sufficiently large. Then for any $\gamma \in (\frac{3}{4}, 1)$, there exist positive constants $\tau$ and $C$ that depend on $(c_\rho, \sigma_{\max}, \sigma_{\min}, d, \gamma)$ such that with probability at least $1 - p^{-\tau} - p^2 e^{-C N^{\frac{1 - \gamma}{2} \wedge (4 \gamma - 3)}}$, the graph estimated by Algorithm~\ref{alg:mtp2} is the same as the underlying graph $G$.
\end{theorem}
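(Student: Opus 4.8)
The plan is to exhibit an event $\mathcal{A}$, depending only on the batch estimates, on which Algorithm~\ref{alg:mtp2} returns exactly $G$, and then bound $\mathbb{P}(\mathcal{A}^c)$ by the two stated terms. I would take $\mathcal{A}=\mathcal{A}_1\cap\mathcal{A}_2$, where $\mathcal{A}_1$ says that for every $(i,j)\in G$ and every $T\subseteq[p]\setminus\{i,j\}$ with $|T|\le d+1$ the batch estimate $\hat\rho_{ij\mid T}$ the algorithm would use is $\ge 0$, and $\mathcal{A}_2$ says that for every $(i,j)\notin G$ at least one of the batch estimates $\hat\rho_{ij\mid \adj_i(G)\cup\{k\}}$ (over admissible $k$), or the analogous ones around $j$, is $<0$. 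Before touching probabilities, two deterministic facts are needed.

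\textbf{Monotonicity.} Under \M2, $\rho_{ij\mid S\cup\{k\}}\le\rho_{ij\mid S}$ for all $i\neq j$ and $k\notin S\cup\{i,j\}$, so $\rho_{ij\mid T}\ge\rho_{ij\mid[p]\setminus\{i,j\}}$ for every $T$. This follows from the three-variable recursion $\rho_{ij\mid S\cup\{k\}}=(\rho_{ij\mid S}-\rho_{ik\mid S}\rho_{jk\mid S})/\sqrt{(1-\rho_{ik\mid S}^2)(1-\rho_{jk\mid S}^2)}$ together with the inequalities $\rho_{ij\mid S}\ge\rho_{ik\mid S}\rho_{jk\mid S}$, $\rho_{ik\mid S}\ge\rho_{ij\mid S}\rho_{jk\mid S}$, $\rho_{jk\mid S}\ge\rho_{ij\mid S}\rho_{ik\mid S}$, which hold because the conditional precision matrix of $\{i,j,k\}$ given $X_S$ is an $M$-matrix; substituting $u=\rho_{ik\mid S}\rho_{jk\mid S}/\rho_{ij\mid S}$, using $u\ge\max(\rho_{ik\mid S}^2,\rho_{jk\mid S}^2)$, and squaring gives the claim. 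Combined with Condition~\ref{cd:signal}, this yields $\rho_{ij\mid T}\ge c_\rho\sqrt{(\log p)/N^{3/4}}$ for every $(i,j)\in G$ and every $T$. \textbf{Degree control.} By induction on $\ell$, on $\mathcal{A}_1$ the graph $\hat G$ at the start of iteration $\ell$ satisfies $\hat G\supseteq G$ and every edge of $\hat G$ not in $G$ joins two vertices of $G$-degree $\ge\ell$; since all $G$-degrees are $\le d$, once $\ell>d$ there are no such extra edges and no qualifying pair, so the algorithm stops by iteration $d$ and every conditioning set it ever tests has size $\le d+1$, so Condition~\ref{cd:eigen} applies. In the same induction, $\mathcal{A}_1$ prevents any true edge from being deleted, and $\mathcal{A}_2$ forces every non-edge to be deleted: $\adj_i(G)$ has size $\le d$, separates $i$ from $j$, and is available since $\hat G\supseteq G$, so when the pair survives to iteration $|\adj_i(G)|$ it is tested with $S=\adj_i(G)$, all $\rho_{ij\mid S\cup\{k\}}$ equal $0$, and some batch estimate is negative. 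Hence $\mathcal{A}\Rightarrow\hat G=G$.

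For $\mathbb{P}(\mathcal{A}_1^c)$ I would use that each batch, being a uniformly random size-$M$ subset of $N$ i.i.d.\ Gaussians, is itself $M$ i.i.d.\ $N(0,\Sigma)$; the standard concentration inequality for empirical partial correlations (valid under Condition~\ref{cd:eigen} for $|T|\le d+1$) gives $\mathbb{P}(|\hat\rho_{ij\mid T}-\rho_{ij\mid T}|>\epsilon)\le C_2 e^{-C_3 M\epsilon^2}$ for $\epsilon$ below a constant. Taking $\epsilon$ of order $c_\rho\sqrt{(\log p)/N^{3/4}}$, so that $M\epsilon^2\asymp c_\rho^2 N^{\gamma-3/4}\log p$ (here $\gamma>\tfrac34$ is what makes this grow), and union-bounding over the $O(p^{d+3})$ triples $(i,j,T)$, gives $\mathbb{P}(\mathcal{A}_1^c)\le p^{-\tau}$ once $c_\rho$ is large, with a sharper accounting that keeps the $N^{\gamma-3/4}$ growth contributing the $e^{-CN^{4\gamma-3}}$ factor. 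For $\mathbb{P}(\mathcal{A}_2^c)$ I would use an anti-concentration bound: when $\rho_{ij\mid T}=0$, a normal-approximation argument shows $\mathbb{P}(\hat\rho_{ij\mid T}<0)\ge c_0$ for a universal $c_0>0$ and $M$ large. For a fixed non-edge, Condition~\ref{cd:size} gives $\ge p-d-2\ge N^{1/8}$ admissible $k$; restricting to a maximal family of batches disjoint across $k$, of which there are at least $\min(N^{1/8},\lfloor N/M\rfloor)\ge N^{(1-\gamma)/2}$, the corresponding sign tests are independent, so the probability they are all $\ge 0$ is at most $(1-c_0)^{N^{(1-\gamma)/2}}$; a union bound over the $p^2$ ordered non-edges gives the $p^2 e^{-CN^{(1-\gamma)/2}}$ term. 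Combining, $\mathbb{P}(\mathcal{A}^c)\le p^{-\tau}+p^2 e^{-CN^{(1-\gamma)/2\wedge(4\gamma-3)}}$.

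The main obstacle is $\mathcal{A}_2$. First, an explicit, $M$-uniform lower bound $c_0$ on $\mathbb{P}(\hat\rho_{ij\mid T}<0)$ when $\rho_{ij\mid T}=0$ has to be established without the symmetry $X_i\mapsto -X_i$ (which does not preserve $\Sigma$), so one needs a quantitative CLT/Berry--Esseen estimate, or a direct argument conditioning on the residuals of $X_i,X_j$ on $X_S$. Second, and more delicate, is decorrelation: the natural estimates $\hat\rho_{ij\mid S\cup\{k\}}$ computed on the common pool of $N$ samples are positively dependent --- indeed \M2-positively-associated, the unhelpful direction --- so independence is recoverable only after passing to a disjoint sub-collection of batches, and it is precisely this that caps the usable number at $\min(N^{1/8},N/M)$ and forces both the exponent $(1-\gamma)/2\wedge(4\gamma-3)$ and the regime $\gamma\in(\tfrac34,1)$. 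A routine but necessary bookkeeping point, handled as in PC-type analyses, is that the algorithm chooses its tests adaptively, so $\mathcal{A}$ must be phrased over a fixed superset of all tests the algorithm could run on the good event, with the trajectory argument for $\hat G$ carried out on $\mathcal{A}$.
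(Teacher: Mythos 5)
Your overall architecture matches the paper's: a good event split into a false-negative part and a false-positive part, a monotonicity lemma showing $\rho_{ij\mid T}\ge\rho_{ij\mid[p]\setminus\{i,j\}}$ under \M2 so that Condition~\ref{cd:signal} propagates to all conditioning sets, concentration plus a union bound over $O(p^{d+3})$ tests for the true edges, an orthant-probability bound for the non-edges using $S=\adj_i(G)\setminus\{j\}$ and the $K=N^{(1-\gamma)/2}$ choices of the extra node $k$ guaranteed by Condition~\ref{cd:size}, and a termination argument at $\ell=d+1$. Your proof of monotonicity via the one-step recursion and the $M$-matrix structure of the conditional precision matrix is a valid alternative to the paper's direct Schur-complement argument (Lemma~\ref{lem:rho}), and your false-negative analysis is essentially the paper's Theorem~\ref{thm:fn}.

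The gap is in the step you yourself flag as the main obstacle, and your proposed fix does not work. You want to ``restrict to a maximal family of batches disjoint across $k$'' of size at least $N^{(1-\gamma)/2}$ and then multiply independent sign probabilities. But each batch is a uniformly random subset of size $M=N^{\gamma}$ drawn from the common pool of $N$ samples, and since $\gamma>\tfrac34>\tfrac12$, any two batches overlap in roughly $M^2/N=N^{2\gamma-1}\to\infty$ samples with high probability; a disjoint subfamily of the drawn batches therefore has size $1$ with high probability, not $N^{(1-\gamma)/2}$. (Partitioning the pool into genuinely disjoint batches would change the algorithm and, more importantly, would shrink each batch or the number of usable tests in a way that breaks the concentration needed for $\mathcal{A}_1$.) The paper's resolution is different: it Taylor-linearizes each $\hat{\rho}_{ij\mid S_k}$ in the sample covariance, splits the linear term into the contribution of the samples unique to batch $k$ (independent across $k$) and the contribution of the shared samples, shows via a negative-association Chernoff bound (Lemma~\ref{lem:max-overlap}) that the maximum pairwise overlap is at most $2M^2/N$ except on an event of probability $e^{-CN^{4\gamma-3}}$ --- this, not the false-negative concentration as you suggest, is where the $4\gamma-3$ exponent originates --- so that the total overlap $T_k\le 2KM^2/N=o(M)$ is negligible, and only then applies Berry--Esseen to the independent parts to get a per-test constant $C_6<1$ and the product bound $(C_6)^K$. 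Without this linearize-and-decompose step (or some substitute that quantifies the dependence induced by the overlaps), your factorization of the orthant probability is unjustified, and the bound $p^2e^{-CN^{(1-\gamma)/2\wedge(4\gamma-3)}}$ does not follow.
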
%

%\begin{remark}\label{rmk:gamma}
%%Although in practice the choice of $\gamma$ can ``act like" a tuning parameter (see Section~\ref{sec:eval}), it is fundamentally different from a tuning parameter in the traditional sense, 
%%Since the consistency guarantees of our algorithm hold for any $\gamma\in(\frac{3}{4},1)$, 
%The consistency guarantees of our algorithm hold for any $\gamma\in(\frac{3}{4},1)$. This means that at least in theory, our algorithm does not require the tuning of the parameter $\gamma$ to consistently estimate the underlying graph $G$. 
%%Note that this is in contrast to the other methods outlined in Section~\ref{sec:pre}, 
%Note that this is in contrast to other methods like graphical lasso or CLIME,
%where the consistency guarantees require a specific choice of the tuning parameter in the algorithm, which is unknown a priori. By setting $\frac{1 - \gamma}{2} = (4 \gamma - 3)$, we obtain that the \emph{theoretically optimal} value is $\gamma=7/9$, as this leads to the best asymptotic rate. However, as seen in Section~\ref{sec:eval}, in practice different values of $\gamma$ can lead to different results. In particular, higher values of $\gamma$ empirically lead to removing less edges since the overlap between batches is higher and thus the empirical partial correlation coefficients are more correlated with each other.
%\end{remark}%

\begin{remark}\label{rmk:gamma}
The consistency guarantees of our algorithm hold for any $\gamma\in(\frac{3}{4},1)$. This means that our algorithm does not require tuning of the parameter $\gamma$ to consistently estimate the underlying graph $G$. 
Note that this is in contrast to other methods like graphical lasso or CLIME,
where the consistency guarantees require a specific choice of the tuning parameter in the algorithm, which is unknown a priori. This is advantageous, since our algorithm can consistently estimate the graph without running any computationally expensive tuning parameter selection approaches, such as stability selection~\citep{MB10}. By setting $\frac{1 - \gamma}{2} = (4 \gamma - 3)$, we obtain that the \emph{theoretically optimal} value is $\gamma=7/9$, as this leads to the best asymptotic rate. However, as seen in Section~\ref{sec:eval}, in practice different values of $\gamma$ can lead to different results. In particular, higher values of $\gamma$ empirically lead to removing less edges since the overlap between batches is higher and thus the empirical partial correlation coefficients are more correlated with each other.
\end{remark}

\begin{remark}
In applications where domain knowledge regarding the graph sparsity is available,  $\gamma$ can still be tuned to incorporate such knowledge to improve estimation accuracy. We see it as a benefit of our method that a tuning parameter can be used when one has access to domain knowledge, but doesn't have to be tuned in order to obtain consistent estimates, since it is provably consistent for all $\gamma\in(\frac{3}{4},1)$.
\end{remark}
%
%
% \textcolor{red}{
% Theorem~\ref{thm:main} shows that for any choice of $\gamma \in (\frac{3}{4}, 1)$, our algorithm is provably consistent in discovering the underlying graph $G$. In practice, the batch size parameter $\gamma$ may ``act like'' a tuning parameter. However, the batch size parameter $\gamma$ in our method is fundamentally different as compared to the tuning parameters in previous methods since the consistency guarantees of our algorithm does not require choosing a specific $\gamma$, which is not the case for the other methods.}
%
%\subsection{Proof of Theorem~\ref{thm:main}}
\paragraph{Proof of Theorem~\ref{thm:main}:} In the following, we provide an overview of the proof of our main result. %
%In this section, we prove Theorem~\ref{thm:main}. To do this, we first introduce 
Theorems~\ref{thm:fn} and~\ref{thm:fp} show that at iteration $\ell = d + 1$, the graph $\hat{G}$ estimated by Algorithm~\ref{alg:mtp2} is exactly the same as the underlying graph $G$. The proof is then completed by showing that Algorithm~\ref{alg:mtp2} stops exactly at iteration $\ell = d + 1$. All proofs are provided in the Supplementary Material.

We start with Theorem~\ref{thm:fn}, which bounds the \emph{false negative rate} of Algorithm~\ref{alg:mtp2}, i.e.~showing that all edges $(i,j)$ in the true graph $G$ are retained. %A precise mathematical statement follows.

\begin{theorem}[False negative rate] \label{thm:fn}
%Removing an edge we should not have
Under Conditions~\ref{cd:eigen} and~\ref{cd:signal} and $c_\rho$ sufficiently large, there exists a positive constant $\tau$ that depends on $(c_\rho, \sigma_{\max}, \sigma_{\min}, d)$ such that with probability at least $1 - p^{-\tau}$, the graph $\hat G$ estimated by Algorithm~\ref{alg:mtp2} at iteration $\ell = d + 1$ contains all edges $(i,j) \in G$. % are selected in the graph estimated by Algorithm~\ref{alg:mtp2} at the .
\end{theorem}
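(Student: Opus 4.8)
}
My plan is to show that, with probability at least $1-p^{-\tau}$, \emph{none} of the empirical partial correlation coefficients that Algorithm~\ref{alg:mtp2} tests for a pair $(i,j)\in G$ comes out negative; since an edge is removed only when such a coefficient is negative, every true edge then survives through iteration $\ell=d+1$. To make this precise despite the fact that the sequence of triples $(i,j,S,k)$ examined and the batches used are data-dependent, I would fix in advance, for every triple $(i,j,S,k)$ with $S\subseteq[p]\setminus\{i,j\}$, $|S|\le d+1$, $k\notin S\cup\{i,j\}$, the random size-$M$ batch it would use (an independent uniform draw from the $N$ samples), and let $\hat\rho^{\,(i,j,S,k)}$ be the corresponding empirical partial correlation. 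Define the bad event $B=\{\exists\,(i,j)\in G \text{ and a triple as above with } \hat\rho^{\,(i,j,S,k)}<0\}$. On $B^{c}$ no test on a true edge ever fires, so $\hat G$ retains all of $G$ through iteration $\ell=d+1$. Since $d$ is constant, the number of such triples is $p^{O(d)}$, so it suffices to bound $\mathbb{P}(\hat\rho^{\,(i,j,S,k)}<0)$ for a single triple and take a union bound.

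The key structural input is a lower bound on the true partial correlations that is \emph{uniform over the conditioning set}. If $(i,j)\in G$ then $i$ and $j$ are adjacent and hence not separated by any $S\subseteq[p]\setminus\{i,j\}$, so by the \M2 properties recalled in Section~\ref{sec:pre} we have $\rho_{ij\mid S}>0$ for all such $S$. To obtain a quantitative version I would use the monotonicity property of \M2 Gaussian distributions, namely that $S\mapsto\rho_{ij\mid S}$ is non-increasing, which gives $\rho_{ij\mid S}\ge\rho_{ij\mid[p]\setminus\{i,j\}}$ for every $S$; this follows from the facts that marginals of \M2 distributions are \M2 and a short Schur-complement computation (details in the Supplementary Material). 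Combining with Condition~\ref{cd:signal} yields $\rho_{ij\mid S'}\ge c_\rho\sqrt{(\log p)/N^{3/4}}$ for every $(i,j)\in G$ and every conditioning set $S'$ of size at most $d+2$ --- in particular for all the sets $S\cup\{k\}$ tested by Algorithm~\ref{alg:mtp2} up to iteration $\ell=d+1$.

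Next I would establish the concentration of the empirical partial correlations. For $|S'|\le d+2$ the index set $M'=S'\cup\{i,j\}$ has size at most $d+4$, so Condition~\ref{cd:eigen} controls the eigenvalues of $\Sigma_{M',M'}$. Writing $\hat\rho_{ij\mid S'}$ as a fixed locally Lipschitz function of the entries of the batch sample covariance $\hatSigma_{M',M'}$ and combining the sub-exponential concentration of sample covariance entries for $M=N^{\gamma}$ i.i.d.\ observations with the fact that matrix inversion is Lipschitz on a neighbourhood of $\Sigma_{M',M'}$, one obtains, for each triple and for $t$ below a constant depending on $(\sigma_{\min},\sigma_{\max},d)$,
\begin{equation*}
\mathbb{P}\bigl(|\hat\rho_{ij\mid S'}-\rho_{ij\mid S'}|>t\bigr)\le C_1\exp(-C_2\,M\,t^{2}),
\end{equation*}
with $C_1,C_2$ depending only on $(\sigma_{\min},\sigma_{\max},d)$ (the rare event that $\hatSigma_{M',M'}$ is ill-conditioned contributes only lower-order terms). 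Applying this with $t=\rho_{ij\mid S'}$ and using $M=N^{\gamma}\ge N^{3/4}$ together with $\rho_{ij\mid S'}^{2}\ge c_\rho^{2}(\log p)/N^{3/4}$ gives $M\,\rho_{ij\mid S'}^{2}\ge c_\rho^{2}\log p$, hence $\mathbb{P}(\hat\rho_{ij\mid S'}<0)\le C_1\,p^{-C_2 c_\rho^{2}}$. A union bound over the $p^{O(d)}$ triples then bounds $\mathbb{P}(B)$ by $C_1\,p^{\,O(d)-C_2 c_\rho^{2}}$, which is at most $p^{-\tau}$ for a suitable $\tau>0$ once $c_\rho$ is taken large enough relative to $d$ and $C_2$.

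I expect the main obstacle to be the concentration step carried out with the required uniformity: rigorously controlling $\hat\rho_{ij\mid S'}$ simultaneously over all $p^{O(d)}$ triples --- including making the inversion/Lipschitz argument precise on the high-probability event that each batch sample covariance is well-conditioned, and checking that the batch exponent $\gamma>3/4$ matches the $N^{3/4}$ scaling in Condition~\ref{cd:signal}. By contrast, the \M2 monotonicity lemma is conceptually the crucial ingredient but short, and everything after the displayed inequality is a routine union bound and choice of constants.
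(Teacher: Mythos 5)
Your proposal is correct and follows essentially the same route as the paper: the monotonicity bound $\rho_{ij\mid S}\ge\rho_{ij\mid[p]\setminus\{i,j\}}$ obtained via Schur complements is exactly the paper's Lemma~\ref{lem:rho}, and your Lipschitz/delta-method concentration of $\hat\rho_{ij\mid S'}$ at rate $\exp(-C_2 M t^2)$ with $M=N^\gamma$ matches the paper's Taylor-expansion decomposition into a linear term plus residual, followed by the same union bound over the $p^{O(d)}$ triples and the same choice of $c_\rho$ large relative to the concentration constant. No substantive differences.
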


The proof of Theorem~\ref{thm:fn} is based on concentration inequalities in estimating partial correlation coefficients. %, which is explained in detail in the appendix. 
The high-level intuition behind the proof is that because the empirical partial correlation coefficients concentrate exponentially around the true partial correlation coefficients, then with high probability if an edge exists, no empirical partial correlation coefficient will be negative; as a consequence, Algorithm~\ref{alg:mtp2} will not eliminate the edge.

The following theorem bounds the \emph{false positive rate}; namely, it shows that with high probability Algorithm~\ref{alg:mtp2} will delete all edges $(i,j)$ that are not in the true graph $G$. %The precise mathematical statement follows:

\begin{theorem}[False positive rate]\label{thm:fp}
Under the same conditions as Theorem~\ref{thm:main}, there exists positive constants $C, \tau$ that depend on $(c_\rho, \sigma_{\max}, \sigma_{\min}, d, \gamma)$ such that with probability at least $1 - p^{-\tau} - p^2e^{-C\frac{1 - \gamma}{2} \wedge 4 \gamma - 3}$, the graph $\hat G$ estimated by Algorithm~\ref{alg:mtp2} at iteration $\ell = d + 1$ does not contain any edges $(i,j) \notin G$. %it holds that all the edges $(i, j) \not\in G$ are not selected in the graph estimated by Algorithm~\ref{alg:mtp2} at the iteration $\ell = d + 1$.
\end{theorem}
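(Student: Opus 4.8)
\textbf{Proof plan for Theorem~\ref{thm:fp}.} The plan is to reduce the statement to a single per‑pair concentration estimate and then take a union bound over the at most $p^2$ ordered pairs. Fix a pair $(i,j)\notin G$ and work on the high‑probability event $E_1$ from Theorem~\ref{thm:fn} on which Algorithm~\ref{alg:mtp2} never deletes a true edge, so that $\adj_i(G)\subseteq\adj_i(\hat G)$ at every step. Set $S^\star:=\adj_i(G)$; then $|S^\star|\le d$, $j\notin S^\star$, and $S^\star$ separates $i$ from $j$ in $G$, so by the \M2 separation property $\rho_{ij\mid S^\star}=0$, and since enlarging a separator preserves separation, $\rho_{ij\mid S^\star\cup\{k\}}=0$ for every $k\notin S^\star\cup\{i,j\}$; by Condition~\ref{cd:size} there are at least $p-d-2\ge N^{1/8}$ such $k$. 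I would then observe that if the edge $i-j$ survives until iteration $\ell=|S^\star|\le d$, then at that iteration the algorithm is forced to test $(S^\star,k)$ for every admissible $k$ (it cycles through all size‑$\ell$ subsets of $\adj_i(\hat G)\setminus\{j\}$ and, for each, all admissible $k$, until a negative coefficient appears), and it cannot terminate earlier since $(i,j)$ itself witnesses $|\adj_i(\hat G)\setminus\{j\}|\ge\ell$. Hence on $E_1$ the event ``$(i,j)$ is a false positive at termination'' implies ``$\hat\rho_{ij\mid S^\star\cup\{k\}}\ge 0$ for all admissible $k$'', each estimate computed on its own randomly drawn batch of size $M=N^\gamma$; dropping the intersection with $E_1$ only enlarges this last event, so it suffices to bound its probability.

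For the probabilistic core, the key elementary fact is that, since $\rho_{ij\mid S^\star\cup\{k\}}=0$ and the data is Gaussian, $\hat\rho_{ij\mid S^\star\cup\{k\}}$ is, conditionally on its batch, symmetric about $0$ (it is a monotone function of a $t$‑statistic with $M-|S^\star|-3$ degrees of freedom), so each of the $\ge N^{1/8}$ tests is negative with probability exactly $\tfrac12$. If the batches were pairwise disjoint the tests would be independent and $\Pr(\text{all nonnegative})\le 2^{-N^{1/8}}$. They are not, but the batches are drawn independently, so among the candidate vertices $k$ one can select a sub‑family of size $m$ whose random $M$‑subsets are pairwise almost disjoint: the expected total pairwise overlap of $m$ such subsets is $O(m^2M^2/N)$, which is negligible compared with $M$ once $m\lesssim\sqrt{N/M}=N^{(1-\gamma)/2}$. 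Coupling the estimates on this sub‑family with estimates on genuinely disjoint batches (perturbing each batch on only $o(M)$ samples, which moves a partial correlation by $o(1)$ and flips its sign only with small probability) reduces to the independent case and yields $\Pr(\text{all nonnegative})\le e^{-c\,m}$ with $m=\lfloor N^{(1-\gamma)/2}\rfloor\wedge\lfloor N^{1/8}\rfloor$. Folding in the coupling error together with the concentration error inherited from staying on $E_1$ --- Theorem~\ref{thm:fn}'s guarantee ultimately requires the batch size $N^\gamma$ to dominate the $N^{3/4}$ scaling of Condition~\ref{cd:signal}, which is one source of the competing $N^{4\gamma-3}$ term --- gives a per‑pair bound $p^{-\tau}+e^{-CN^{\frac{1-\gamma}{2}\wedge(4\gamma-3)}}$, and a union bound over the $\le p^2$ pairs finishes the argument.

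The main obstacle is the middle step: turning ``almost disjoint batches'' into ``almost independent tests'' quantitatively, i.e.\ controlling how an $O(M^2/N)$ sample overlap perturbs the joint law of many overlapping‑conditioning‑set partial correlations, and doing so uniformly enough to survive the union bound over $p^2$ pairs. This is exactly where the two‑term exponent originates --- a larger $\gamma$ gives more samples per batch but fewer nearly disjoint batches, hence fewer effectively independent tests and the rate $\tfrac{1-\gamma}{2}$, whereas a smaller $\gamma$ weakens the batch‑level concentration needed to remain on $E_1$, hence the rate $4\gamma-3$ --- and the two balance at $\gamma=7/9$ as in Remark~\ref{rmk:gamma}. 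An alternative to the coupling argument would be a bounded‑differences concentration inequality for $W:=\#\{k:\hat\rho_{ij\mid S^\star\cup\{k\}}<0\}$, using $\mathbb E[W]=\Theta(\#\{k\})$ and that, after the same overlap analysis, each block of $M$ samples perturbs $W$ by a controlled amount; I expect this to give the same rate up to constants.
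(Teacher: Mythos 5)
Your overall architecture matches the paper's: fix $(i,j)\notin G$, work on the event of Theorem~\ref{thm:fn} so that $\adj_i(G)\subseteq\adj_i(\hat G)$, use $S_{ij}=\adj_i(G)\setminus\{j\}$ as a separator so that $\rho_{ij\mid S_{ij}\cup\{k\}}=0$ for every admissible $k$ (of which Condition~\ref{cd:size} supplies enough), bound the orthant probability that all the corresponding batch estimates are nonnegative, and union bound over the $p^2$ pairs. The paper packages the orthant bound as Lemma~\ref{lem:fn} with $K=N^{(1-\gamma)/2}$ conditioning sets, which agrees with your choice of $m$. Your observation that the null sample partial correlation is exactly symmetric about zero (a monotone function of a $t$-statistic) is a clean substitute for the paper's Berry--Esseen step on the independent part.

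The genuine gap is in the overlap control, and it is precisely where the $e^{-CN^{4\gamma-3}}$ term must come from. You bound the batch overlaps only \emph{in expectation} (``the expected total pairwise overlap of $m$ such subsets is $O(m^2M^2/N)$''). An expectation bound is not enough: the coupling argument has to be run conditionally on a high-probability event on which \emph{every} pairwise overlap is $O(M^2/N)$, and the failure probability of that event is the second exponent. The paper's Lemma~\ref{lem:max-overlap} proves, via a Hoeffding bound for negatively associated indicators, that $\Pr\bigl(\max_{k,k'}|B_k\cap B_{k'}|>2M^2/N\bigr)\le\exp\bigl(-2(M^2/N^2)^2N+2\log K\bigr)=\exp(-\Theta(N^{4\gamma-3}))$; this is the sole source of the $4\gamma-3$ term and the reason $\gamma>3/4$ is required. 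Your plan instead attributes $4\gamma-3$ to the interaction between the batch size and the $N^{3/4}$ scaling of Condition~\ref{cd:signal} through Theorem~\ref{thm:fn}; that is a misdiagnosis (that interaction only feeds the $p^{-\tau}$ term), and following it you would not go looking for the missing tail bound on the overlaps. A secondary soft spot: ``perturbing each batch on $o(M)$ samples moves a partial correlation by $o(1)$'' is too weak as stated, since under the null the estimate itself lives at scale $M^{-1/2}$; you need the perturbation to be $o(M^{-1/2})$ together with anti-concentration at that scale. This does go through once the overlap event is in force (the perturbation is $O(\sqrt{T_k}/M)$ with $T_k\le 2KM^2/N\ll M$), and the paper implements it by linearizing $\hat\rho_k=g_k(\hat\sigma_k)$, splitting the linear term into disjoint and overlapping contributions, and bounding the overlapping part and the Taylor remainder by $\tfrac12M^{-1/2}$ each.
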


%The proof of Theorem~\ref{thm:fp} relies heavily on the following lemma that considers the orthant probability of partial correlation coefficients and also shows why we need to use batches. 
The proof of Theorem~\ref{thm:fp} relies heavily on the following lemma that considers the orthant probability of partial correlation coefficients. Recall in Algorithm~\ref{alg:mtp2} that for a particular edge $i -j $ in the estimated graph $\hat G$ at a given iteration, we calculate a series of empirical partial correlation coefficients with different conditioning sets. The only way Algorithm~\ref{alg:mtp2} will not delete the edge is if all empirical partial correlation coefficients are $\geq 0$. Thus given 2 nodes $i, j$ for which $(i, j) \notin G$, we need to upper bound the orthant probability that all empirical partial correlation coefficients computed by Algorithm~\ref{alg:mtp2} are non-negative. As we will discuss next, the use of batches is critical for this result.

\begin{lemma}\label{lem:fn}
Consider a pair of nodes $(i,j)\notin G$. Assume that there exists $K := N^{\frac{1 - \gamma}{2}}$ sets of nodes $S_1, \cdots, S_K \subseteq [p] \setminus \{i,j\}$ with $|S_k| \leq d + 2$ that satisfy $\rho_{ij \mid S_k} = 0$. Then there exists positive constants $C$ and $N_0$ that depends on $(\sigma_{\max}, \sigma_{\min},d)$ such that
\begin{align}\label{eq:exp}
\Pr(\hat{\rho}_{ij \mid S_k} > 0\quad  \forall k \in [K]) \leq \exp(- C N^{\frac{1 - \gamma}{2} \wedge 4 \gamma - 3}).
\end{align}
\end{lemma}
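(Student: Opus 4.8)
}
The plan is to exploit two facts. First, for Gaussian data with $\rho_{ij\mid S_k}=0$, the sample partial correlation $\hat\rho_{ij\mid S_k}$ computed on any \emph{fixed} set of $n$ observations has a distribution symmetric about $0$ (classically, its density is proportional to $(1-r^{2})^{(n-|S_k|-4)/2}$ on $(-1,1)$), so in isolation each test returns a nonnegative value with probability exactly $\tfrac12$. Second, the $K$ statistics are computed on \emph{independently drawn} batches of size $M=N^{\gamma}\ll N$, so they are only weakly dependent and the ``all nonnegative'' event should behave like $2^{-K}=\exp(-(\log 2)\,N^{(1-\gamma)/2})$, up to an error governed by the batch overlap; breaking the strong dependence between partial correlations with overlapping conditioning sets is precisely the purpose of the batches.

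I would fix $(i,j)$, condition on the realized batches $B_1,\dots,B_K\subseteq[N]$, and work on the event $\mathcal G$ that $|B_k\cap B_{k'}|\le 2M^{2}/N$ for all $k\ne k'$; a Chernoff bound plus a union bound over the $\binom K2$ pairs makes $\Pr(\mathcal G^{c})$ super-exponentially small in $N$, hence negligible. On $\mathcal G$, let $B_k':=B_k\setminus\bigcup_{k'\ne k}B_{k'}$ be the ``private'' part of batch $k$, so $|B_k'|\ge M-2KM^{2}/N\ge M/2$ (using $\gamma<1$). Writing $\hat\rho_{ij\mid S_k}^{(B_k)}=\hat\rho_{ij\mid S_k}^{(B_k')}+D_k$ and using $\bigcap_k(A_k\cup C_k)\subseteq(\bigcup_k A_k)\cup(\bigcap_k C_k)$ with $A_k=\{|D_k|>\delta\}$, $C_k=\{\hat\rho_{ij\mid S_k}^{(B_k')}>-\delta\}$,
\[
\Pr\!\Big(\bigcap_{k=1}^{K}\{\hat\rho_{ij\mid S_k}^{(B_k)}>0\}\Big)
\;\le\;\sum_{k=1}^{K}\Pr\big(|D_k|>\delta\big)
\;+\;\prod_{k=1}^{K}\Pr\big(\hat\rho_{ij\mid S_k}^{(B_k')}>-\delta\big),
\]
for $\delta$ to be chosen. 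Conditionally on the batches the $B_k'$ are disjoint, so the product runs over \emph{independent} events; by the symmetry above and the bound $O(\sqrt M)$ on the density of a sample partial correlation near $0$ (which uses Condition~\ref{cd:eigen} to keep all relevant sub-covariances of size $\le d+4$ well-conditioned), each factor is $\le\tfrac12+C\delta\sqrt M$, so taking $\delta\asymp N^{-\gamma/2}$ with a small enough constant makes the product at most $(\tfrac23)^{K}=\exp(-c\,N^{(1-\gamma)/2})$.

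It remains to bound $\Pr(|D_k|>\delta)$. A von Mises / influence-function expansion writes $D_k=M^{-1}\sum_{n\in B_k\setminus B_k'}\psi_k(X_n)+(\text{remainder})$, where $\psi_k$ is the mean-zero (because $\rho_{ij\mid S_k}=0$), bounded-variance, sub-exponential influence function of the partial-correlation functional and $|B_k\setminus B_k'|\le 2KM^{2}/N=2N^{(3\gamma-1)/2}$; both the expansion and the remainder are controlled with exponentially high probability and uniformly over the conditioning sets via Condition~\ref{cd:eigen}. A Bernstein bound for this sum exceeding $\delta M\asymp N^{\gamma/2}$, carefully tracking the interplay of the overlap size $\asymp N^{2\gamma-1}$, the count $K=N^{(1-\gamma)/2}$, and the threshold, yields $\Pr(|D_k|>\delta)\le\exp\!\big(-c\,N^{(1-\gamma)/2\,\wedge\,(4\gamma-3)}\big)$; summing over $k\in[K]$ absorbs the $\log K$ factor. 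Combining with the product bound and removing the conditioning on the batches (up to the negligible $\Pr(\mathcal G^{c})$) gives \eqref{eq:exp}. I expect the last step to be the main obstacle: one must produce the expansion of $\hat\rho_{ij\mid S_k}$ with a genuinely negligible remainder (uniformly, with exponential tails, which is where the eigenvalue control enters) and then choose $\delta$ to balance the ``ideal'' rate $N^{(1-\gamma)/2}$ against the overlap-driven rate — the balance point being precisely $\gamma=\tfrac79$, which also explains why $\gamma\in(\tfrac34,1)$ is required (so that $4\gamma-3>0$).
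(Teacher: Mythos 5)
Your proposal is correct and follows essentially the same route as the paper's proof: condition on the event that all pairwise batch overlaps are at most $2M^2/N$, split each statistic into the contribution of the ``private'' (mutually disjoint) part of its batch plus an overlap-driven perturbation and a Taylor/von Mises remainder, control the perturbation and remainder by concentration, and use independence of the private parts to turn the orthant probability into a product of $K$ factors each bounded away from $1$, yielding $\exp(-cN^{(1-\gamma)/2})$. The only substantive differences are local: for the per-factor bound the paper applies Berry--Esseen to the linearized statistic $\ell_k(\hat{\sigma}_k^{(1)} - \sigma_k)$ whereas you invoke the exact symmetric null distribution of the sample partial correlation together with an $O(\sqrt{M})$ density bound at the origin (a clean alternative for Gaussian data), and in the paper the $4\gamma-3$ exponent is contributed by the failure probability of the overlap event via an additive Hoeffding bound --- so it is not negligible as you claim --- rather than by the Bernstein bound on the overlap perturbation, though either bookkeeping delivers the stated rate.
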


\begin{figure*}[!t]
	\centering
	%\subcaptionbox{Random graphs}{\includegraphics[width=0.33\textwidth]{final_images/random.png}}%
	\subcaptionbox{Random graphs}{\includegraphics[width=0.31\textwidth]{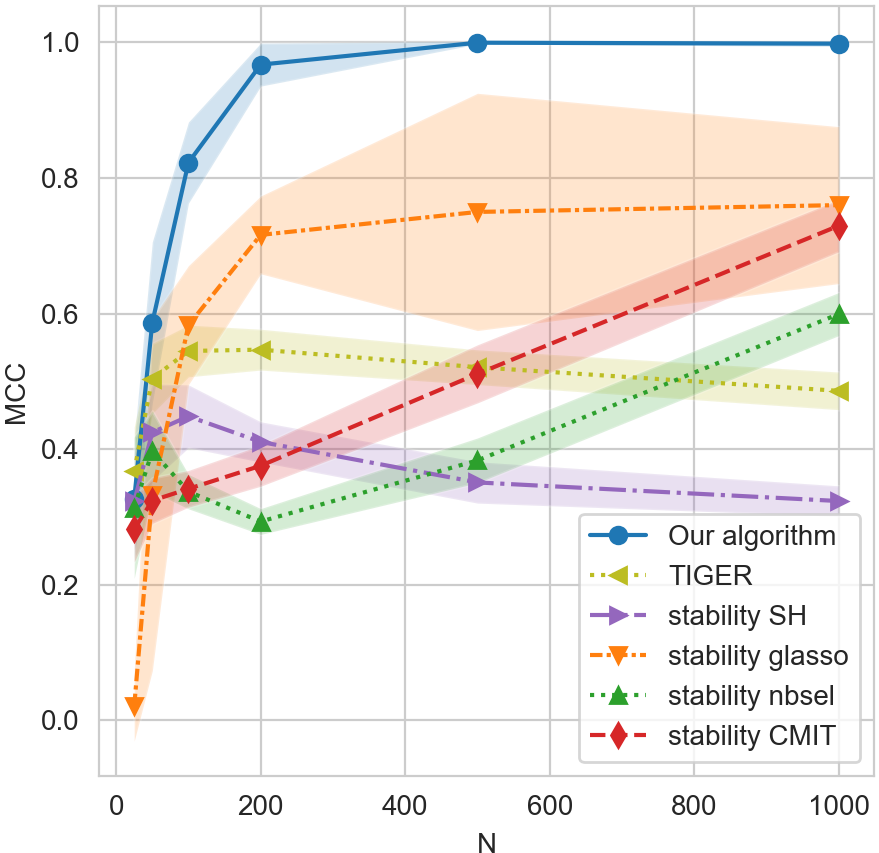}}% 
	\hfill % <-- Seperation
	\subcaptionbox{Chain graphs}{\includegraphics[width=0.31\textwidth]{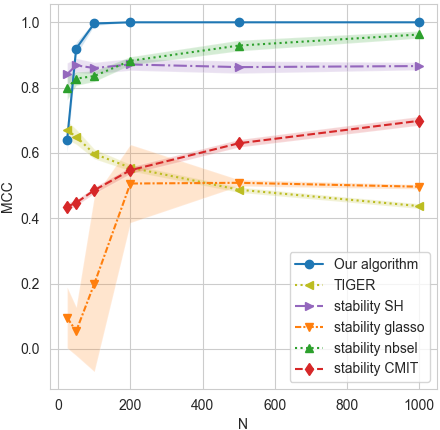}}%
	%\subcaptionbox{Chain graphs}{\includegraphics[width=0.33\textwidth]{example-image}}%
	\hfill
	\subcaptionbox{Grid graphs}{\includegraphics[width=0.31\textwidth]{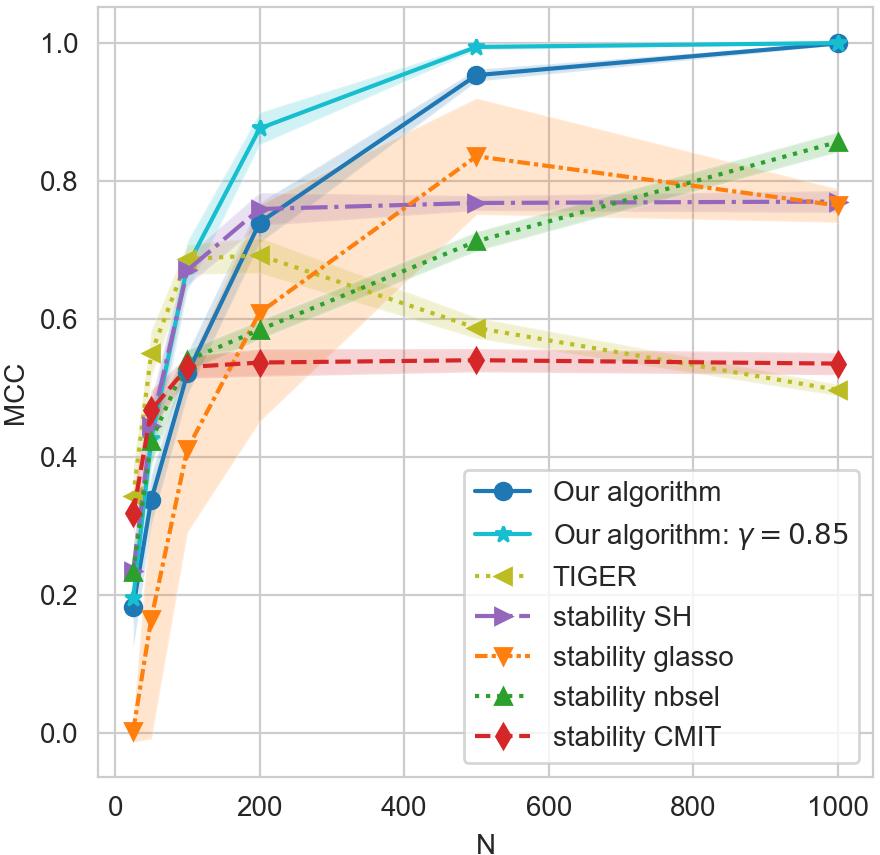}}%
	%\subcaptionbox{Grid graphs}{\includegraphics[width=0.33\textwidth]{example-image}}%
%	\vspace{-0.2cm}
	\caption{Comparison of different algorithms evaluated on MCC across (a) random, (b) chain, (c) grid graphs with $p=100$ and $N \in \{ 25, 50, 100, 200, 500, 1000\}$. For each graph and choice of $p$ and $N$, results are shown as an average across $20$ trials. The shaded areas correspond to $\pm 1$ standard deviation of MCC over $20$ trials.}
	\label{fig:mcc}
	\vspace{-0.2cm}
\end{figure*}

To provide intuition for the proof of Lemma~\ref{lem:fn}, consider a scenario where the batch size $M$ is chosen small enough such that the batches used to estimate the different $\hat{\rho}_{ij \mid S_k}$'s have no overlap. Since in this case all $\hat{\rho}_{ij \mid S_k}$'s are independent, the bound in Lemma~\ref{lem:fn} can easily be proven, namely: for some positive constant $\delta < 1$, it holds that
\begin{align*}
    \Pr(\hat{\rho}_{ij \mid S_k} & > 0  \quad \forall k \in [K])  = \prod_{k=1}^K \Pr(\hat{\rho}_{ij \mid S_k} > 0) \\
    & \leq \delta^K = \exp\big(-\log (1 / \delta) \cdot N^{\frac{1 - \gamma}{2}}\big).
\end{align*}
However, for small batch size $M$ the empirical partial correlation coefficients $\hat{\rho}_{ij \mid S}$ don't concentrate around $\rho_{ij \mid S}$, which may result in false negatives. In the proof of Lemma~\ref{lem:fn} we show that choosing a batch size of $M = N^\gamma$ guarantees the required concentration result as well as a sufficiently weak dependence among the empirical partial correlation coefficients $\hat{\rho}_{ij \mid S_k}$'s to obtain the exponential upper bound in~\eqref{eq:exp} as in the independent case.
%to guarantee concentration, in Algorithm~\ref{alg:mtp2} we set the batch size $M = N^\gamma$ for $\gamma \in (\frac{3}{4}, 1)$. This implies overlap between the batches and hence correlation among the different $\hat{\rho}_{ij \mid S_k}$'s. This makes the proof of Lemma~\ref{lem:fn} more challenging (see Appendix). \textcolor{blue}{The basic intuition for the proof in such more challenging scenario is that, although there are dependence among $\hat{\rho}_{ij \mid S_k}$'s when choosing $M := N^\gamma$, their dependence is weak. Thus the orthant probability would behave ``similar'' to the independence case, which gives the exponential upper bound in~\eqref{eq:exp}. Note that when the $\hat{\rho}_{ij \mid S_k}$'s are instead calculated on the full data set, Lemma~\ref{lem:fn} usually does not hold. This is because in this scenario, the $\hat{\rho}_{ij \mid S_k}$'s are usually highly dependent with each other. As a consequence, the orthant probability would be much larger than the independent or weakly dependent case. This also shows the importance of using batches.} 
Lemma~\ref{lem:fn} implies Theorem~\ref{thm:fp} by taking uniform control over all edges $(i,j) \not\in G$. 
% Lemma~\ref{lem:fn} shows that given a specific edge $(i,j) \not\in G$, with high probability, the edge will be deleted by Algorithm~\ref{alg:mtp2} since with high probability there exists at least one conditioning set $S_k$ such that $\hat{\rho}_{ij \mid S_k} < 0$. By applying Lemma~\ref{lem:fn}, intuitively we can easily obtain the proof of Theorem~\ref{thm:fp} by taking uniform control over all edges $(i,j) \not\in G$. This proof is also provided in the appendix.
%
Finally, to complete the proof of Theorem~\ref{thm:main}, it remains to show that Algorithm~\ref{alg:mtp2} terminates at  iteration $\ell = d + 1$. %This is shown by proving that the maximum degree of the estimated graph at iteration $\ell = d + 1$ is at most $d$, which matches the stopping criterion.

\begin{proof}[Proof of Theorem~\ref{thm:main}]
%At iteration $\ell = d + 1$, 
It follows from Theorem~\ref{thm:fn} and Theorem~\ref{thm:fp} that with probability at least $1 - p^{-\tau} - p^2 e^{-CN^{\frac{1 - \gamma}{2} \wedge 4 \gamma - 3}}$, the graph estimated by Algorithm~\ref{alg:mtp2} at iteration $\ell = d + 1$ is exactly the same as $G$. Since the maximum degree of $G$ is at most $d$, it matches the stopping criterion of Algorithm~\ref{alg:mtp2}. As a consequence, Algorithm~\ref{alg:mtp2} terminates at iteration $\ell = d + 1$. % and the outpu produced by the two algorithms are the same.
\end{proof}

%\subsection{Implementation}

% Implementation section: mention how batch size functions like a tuning parameter, compare to Anandkumar (and point out diffs in implementation)

% Add paragraph to 3.1 that states algorithm in English, 2nd states intuition behind algorithm (and highlight batch size importance), point out in Condition 3.2 that the rate is slower because we have different algorithm

% Graphs: stability selection method, 0.75, 0.85, 0.95
% S-H

\section{Empirical Evaluation}
\label{sec:eval}

In the following, we evaluate the performance of our algorithm for structure recovery in \M2 Gaussian graphical models in the high-dimensional, sparse regime. We first compare the performance of Algorithm~\ref{alg:mtp2} to various other methods on synthetically generated datasets and then present an application to graphical model estimation on financial data. The code to reproduce our experimental results is available at \url{https://github.com/puma314/MTP2-no-tuning-parameter}.

\subsection{Synthetic Data}
\label{sec_synthetic}

\begin{figure*}[!t]
	\centering
	\subcaptionbox{ROC curve
	}{\includegraphics[width=0.3\textwidth]{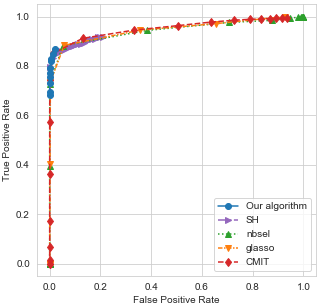}}
	\hfill
	\subcaptionbox{MCC}% vs. normalized tuning parameter}% for all algorithms. Shaded denotes the standard error.}
	{\includegraphics[width=0.3\textwidth]{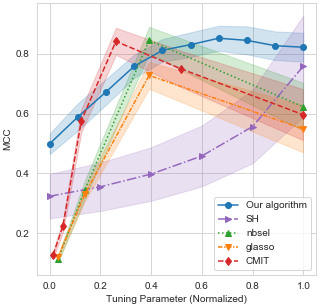}}%
	\hfill
	\subcaptionbox{True positive rate}% vs. normalized tuning parameter}% for all algorithms. Shaded denotes the standard error.}
	{\includegraphics[width=0.3\textwidth]{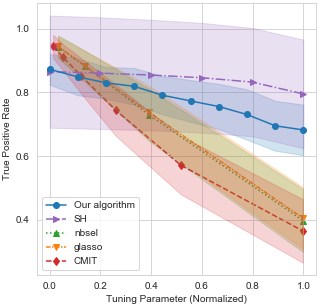}}%
%	\vspace{-0.1cm}
	\caption{(a) ROC curves, (b) MCC, and (c) true positive rate versus normalized tuning parameter for random graphs with $p=100$ and $N=500$ across $30$ trials. The shaded regions correspond to $\pm 1$ standard deviation of MCC (TPR resp.) across $30$ trials.}
	\label{fig:roc}
	\vspace{-0.2cm}
\end{figure*}

%Throughout this section, we remind the reader that $N$ refers to the number of samples and $p$  the number of covariates. 
Given a precision matrix $\Theta \in \R^{p \times p}$, we generate $N$ i.i.d. samples $x^{(1)}, \ldots, x^{(N)} \sim \mathcal{N}(0, \Theta^{-1})$. We let $\hat{\Sigma} = \frac{1}{N} \sum_{i=1}^{N} (x^{(i)})(x^{(i)})^T$ denote the sample covariance matrix. To analyze the performance of our algorithm in various scenarios, we vary $N$ for $p=100$. In addition, we consider three different sparsity patterns in the underlying precision matrix $\Theta$ that are similarly considered by~\citet{SH15}, namely: %, which are detailed below. %\textcolor{red}{We need more simulations in the appendix. When does our approach break down? Fr what kind of graphs does it not perform well? These results here are a bit too positive.}

\emph{Grid: } Let $B$ be the adjacency matrix of a 2d-grid of size $\sqrt{p}$. Let $\delta := 1.05 \cdot \lambda_1(B)$, $\tilde{\Theta} := \delta I - B$ and $\Theta = D \tilde{\Theta} D$, where $D$ is a diagonal matrix such that $\Sigma = \Theta^{-1}$ has unit diagonal entries.

\emph{Random: } Same as for \emph{grid} above, but with $B$ replaced with a symmetric matrix having $0$ diagonal and one percent non-zero off diagonal entries uniform on $[0,1]$ chosen uniformly at random.

\emph{Chain: } We let $\Sigma^* := (\sigma^*_{jk}) = (0.9^{|j-k|}), j,k = 1, \ldots, p$. Then we take $\Omega := (\Sigma^*)^{-1}$.

Our primary interest in comparing different algorithms is their performance at recovering the underlying graph structure associated with $\Theta$. %Namely, we want to assess each algorithm's recovery of the set $\left \{ (j, k): \Theta_{j,k} < 0 \right \}$. 
Similarly as in~\citep{SH15}, in Figure~\ref{fig:mcc} we evaluate their performance using Matthew's correlation coefficient (MCC):
\begin{align*}
& \text{MCC} =\\
& \frac{\TP \cdot \TN - \FP \cdot \FN}{ \left( (\TP + \FP)(\TP + \FN)(\TN + \FP)(\TN + \FN)\right)^{1/2}},
\end{align*}
where $\TP$, $\TN$, $\FP$ and $\FP$ denote the number of true positives, true negatives, false positives and false negatives respectively. Intuitively, MCC measures the correlation between the presence of edges in the true and estimated graphs. Thus, a higher MCC score means less number of false positives \emph{and} false negatives. Since MCC combines true positive rates (TPR) and false positive rates (FPR), we think it is a compelling metric. MCC has also been used in similar work~\citep{SH15}. In the appendix, we also provide evaluation results based on TPR and FPR.

\emph{Choice of Parameters:} We fix $p=100$ and vary $N = 25, 50, 100, 200, 500, 1000$ to analyze how the ratio $p/N$ affects performance for the various algorithms. For each setup and value of $N$, we do $20$ trials of each algorithm and report the average of the MCCs across the trials.

%\paragraph{Methods Compared: }

\emph{Methods Compared:} We benchmark our algorithm against a variety of state-of-the-art methods for structure learning in Gaussian graphical models (see Section~\ref{sec:pre}) for a range of tuning parameters: 
\begin{itemize}
    \item {SH:} Slawski and Hein~\citep{SH15} considered the same problem as in this paper. For comparison to their algorithm we use the same range of tuning parameters as considered by them, namely $q \in \{ 0.7, 0.75, 0.8, 0.85, 0.9, 0.95, 0.99\}$. 
    \item \emph{glasso:} For graphical lasso~\citep{FHT08} we vary the sparsity parameter around the the theoretically motivated tuning parameter of $\sqrt{\log(p)/n}$, namely $\lambda \in \{ 0.055,  0.16,  0.45,  1.26,  3.55, 10 \}$.
    \item \emph{nbsel:} For neighborhood selection~\citep{MB06} we use the same $\lambda$ values as for \emph{glasso}.
    \item \emph{TIGER:} For TIGER~\citep{LWang17}, we use the theoretically optimal value $\lambda := \pi \sqrt{\frac{\log (p)}{n}}$.
    \item \emph{CMIT: } This algorithm~\citep{Anand12} has two tuning parameters. Since the run-time is $p^{\eta+2}$ in the maximal size of the conditioning set $\eta$, we set $\eta = 1$ for computational reasons. For $\lambda$, we use the the same values as for \emph{glasso}.
    \item \emph{Our algorithm:} We use the asymptotically optimal choice of $\gamma = 7/9$ (see Remark \ref{rmk:gamma}) and also compare to $\gamma = 0.85$, which falls in the allowable range $(0.75,1)$.
\end{itemize}

For the comparison based on MCC in Figure~\ref{fig:mcc}, we use \emph{stability selection} \citep{MB10}, where an algorithm is run multiple times with different subsamples of the data for each tuning parameter and an edge is included in the estimated graph if it is selected often enough (we used $80\%$).

\emph{Discussion: } Figure~\ref{fig:mcc} compares the performance of the various methods based on MCC for random graphs, chain graphs and grid graphs. Compared with the algorithm that has similar theoretical properties as ours, namely TIGER, our algorithm has better overall performance across all simulation set-ups. For the other state-of-the-art methods, Figure~\ref{fig:mcc}(a) shows that our algorithm is able to offer a significant improvement for random graphs over competing methods. Also on chain graphs (Figure~\ref{fig:mcc}(b)) our algorithm is competitive with the other algorithms, with \emph{SH} and \emph{nbsel} performing comparably. For the grid graph (Figure~\ref{fig:mcc}(c)), for $N \leq 200$ \emph{SH} with stability selection outperforms our algorithm with $\gamma = 7/9$. However, it is important to note that stability selection is a major advantage for the compared algorithms and comes at a significant computational cost.  Moreover, by varying $\gamma$ in our algorithm its performance can be increased and becomes competitive to \emph{SH} with stability selection. Both points are discussed in more detail in the Supplementary Material. Another interesting phenomenon is that in Figure~\ref{fig:mcc}(c), our algorithm with $\gamma = 0.85$ performs better than the ``theoretically optimal'' $\gamma = 7 / 9$, which may seem to contradict our theoretical results. Notice, however, that ``theoretical optimality'' holds for $N\to\infty$. In the finite sample regime considered here factors such as $\sigma_{\min}$, $\sigma_{\max}$ and $d$ can influence the optimal choice.
% we mean that as the amount of data goes to infinity, the optimal choice of $\gamma$ converges to $\gamma = 7 / 9$. While in the finite sample regime, this may not be the case. Other factors (i.e., $\sigma_{\min}$, $\sigma_{\max}$ and $d$) can also influence the optimal choice.}

%\paragraph{Effect of Tuning Parameters: }
%As stated in Remark \ref{rmk:gamma}, $\gamma$ can \emph{act} like a tuning parameter, although it is not a tuning parameter in the traditional sense, since the consistency guarantees are independent of the choice of $\gamma \in (3/4, 1)$. 
To evaluate the sensitivity of the various algorithms to their respective tuning parameters, %(and our algorithm to choice of $\gamma$ in the valid range). 
 we generate an ROC curve for each algorithm on random graphs with $p=100$ and $N \in \{25, 50, 100, 200, 500, 1000\}$, of which $N=500$ is shown in Figure~\ref{fig:roc}(a); see the Supplementary Material for more details and plots. All algorithms  perform similarly in terms of their ROC curves. Note that since our algorithm can only choose $\gamma$ from the range $(0.75, 1)$, its false positive rate is upper bounded and thus it is impossible to get a full ``ROC'' curve. %To generate the ROC curves, we sample $30$ different random graphs and average together the resulting ROC curves. More details as well as more plots are provided in the supplementary material. 
 Figure~\ref{fig:roc}(b) and (c) show the MCC and true positive rate (TPR) for each algorithm as a function of the tuning parameter normalized to vary between $[0,1]$. %Note that since our algorithm doesn't have a true tuning parameter, its false positive rate is upper bounded and thus it is impossible to get a full ``ROC" curve (as seen in Figure~\ref{fig:roc}(a)). For a similar reason, \emph{SH} does not have a full ROC curve. From examining the leftmost graph, most algorithms seem to perform similarly in terms of the ROC curve. However, from the middle plot, we see that when examining the MCC, 
 Our algorithm is the least sensitive to variations in the tuning parameter, as it has one of the smallest ranges in both MCC and TPR (the $y$-axes) as compared to the other algorithms. Our algorithm also shows the smallest standard deviations in MCC and in TPR, showing its consistency across trials (especially compared to \emph{SH}). We here concentrate on TPR since the variation in FPR between  all algorithms is small across trials. % and also has one of the smallest standard deviations in MCC across trials for a given choice of a tuning parameter. This can be explained by the plot to the right, from which we can see that our algorithm has one of the smallest standard deviations in TPR for a given choice of tuning parameter, especially compared to \emph{SH} and \emph{CMIT}. Although not shown, the variance of the FPR (false positive rate) of all algorithms for reasonable choices of tuning parameter is quite small across trials, but there is much greater variance in the TPR across trials. Thus when creating an ROC curve from averages across trials, there is a lot of noise in reporting the mean FPR and TPR, which is why all algorithms seem to perform equally well. However, when examining the TPR vs. tuning parameter or MCC vs. tuning parameter, the robustness of our algorithm to choice of tuning parameter stands out in constant to the other algorithms.
 Taken together, it is quite striking that our algorithm with fixed $\gamma$ generally outperforms methods with stability selection.

\subsection{Application to Financial Data}

We now examine an application of our algorithm to financial data. The \M2 constraint is relevant for such data, since the presence of a latent global market variable leads to positive dependence among  stocks~\citep{HL02,MS05}. We consider the daily closing prices for $p=452$ stocks that were consistently in the S\& P 500 index from January 1, 2003 to January 1, 2018, which results in a sample size of $N=1257$. Due to computational limitations of stability selection primarily with \emph{CMIT}, we performed the analysis on the first $p = 100$ of the $452$ stocks. % to use as covariates, where the $100$ stocks are selected as the first $100$ of the dataset. 
The $100$ stocks are categorized into 10 sectors, known as the Global Industry Classification Standard (GICS) sectors. This dataset is gathered from Yahoo Finance and has also been analyzed in~\citep{LHZ12}.

A common task in finance is to estimate the covariance structure between the log returns of stocks. Let $S_j^{(t)}$ denote the closing price of stock $j$ on day $t$ and let $X_j^{(t)} := \log ( S_j^{(t)} / S_j^{(t-1)} )$ denote the log return of stock $j$ from day $t-1$ to $t$. Denoting by $X := (X_1, \ldots, X_{100})^T$ the random vector of daily log returns of the $100$ stocks in the data set, then our goal is to estimate the undirected graphical model of $X$. We do this by treating the $1257$ data points $X^{(t)} := (X_1^{(t)}, \ldots, X_{100}^{(t)})$ corresponding to the days $t=1, \ldots, 1257$ as i.i.d.~realizations of the random vector $X$. 

As in Section~\ref{sec_synthetic}, we compare our method to \emph{SH}, \emph{glasso} (using both stability selection and cross-validation), \emph{nbsel}, \emph{CMIT} (using both stability selection and the hyperparameter with the best performance) and TIGER. % to determine the set of edges in the estimated graphical model for $X$ as we did for the evaluation with synthetic data. 
Note that here we cannot assess the performance of the various methods using MCC since the graph structure of the true underlying graphical model is unknown. Instead, we assess each estimated graph based on its \emph{modularity coefficient}, namely the performance at grouping stocks from the same sector together. Table~\ref{tab:modularity} shows that our method using fixed $\gamma = 7 / 9$ outperforms all other methods in grouping the stocks. For further details on the analysis see the Supplementary Material.

\begin{table}[t!]
	\centering
%	\begin{scriptsize}
	\begin{tabular}{ c c }
		Method & Modularity \\ & Coefficient \\ \hline \hline
		Our Algorithm $(\gamma = 7./9.$) & 0.482 \\ \hline
		Slawski-Hein with st.~sel. & 0.418 \\ \hline
		%[0.7, 0.8, 0.9, 0.95, 0.99, 1]
		Neighborhood selection with st.~sel. & 0.350\\ \hline
		%[0.0001, 0.0005, 0.001, 0.005, 0.01, 0.05, 0.1]
		Graphical Lasso with st.~sel. & 0. \\ \hline
		%[0.0001, 0.0005, 0.001, 0.005, 0.01, 0.05, 0.1]
		Cross-validated graphical lasso & 0.253\\ \hline
		%20 alphas, cv=3
		\emph{CMIT} with st.~sel. & -0.0088  \\ \hline
		\emph{CMIT} with best hyperparameter & -0.0085 \\ \hline
		TIGER & -0.5 \\ 
	\end{tabular}
%\end{scriptsize}
	\vspace{-0.1cm}
	\caption{Modularity scores of the estimated graphs; higher score indicates better clustering performance; ``st.~sel'' stands for ``stability selection''. For our algorithm we used the theoretically optimal value of $\gamma=7/9$.}\label{tab:modularity}
	\vspace{-0.3cm}
\end{table}

\section{Discussion}
\label{sec:discuss}

In this paper, we proposed a tuning-parameter free, constraint-based estimator for learning the structure of the underlying Gaussian graphical model under the constraint of \M2. We proved consistency of our algorithm in the high-dimensional setting without relying on an unknown tuning parameter. We further benchmarked our algorithm against existing algorithms in the literature with both simulated and real financial data, thereby showing that it outperforms existing algorithms in both settings. A limitation of our algorithm is that its time complexity scales as $O(p^d)$; it would be interesting in future work to develop a more computationally efficient algorithm for graphical model estimation under \M2. Another limitation is that our algorithm is only provably consistent in the high-dimensional setting. However, the strong empirical performance of our algorithm as compared to existing algorithms is quite striking, given in particular these results are from fixed $\gamma$. To our knowledge, this is the first tuning-parameter free algorithm for structure recovery in Gaussian graphical models with consistency guarantees.%, making it convenient for applications, where tuning parameters are not known a priori.

%\textcolor{red}{Please go through the references and make them consistent. I.e.: abbreviate first names, write out journal names, Capitalize journal names, capitalize book titles.}

\subsubsection*{Acknowledgements}

We thank Dheeraj Nagaraj, Cheng Mao and Philippe Rigollet and the anonymous reviewers for helpful discussions. The authors acknowledge support by NSF (DMS-1651995), ONR (N00014-17-1-2147 and N00014-18-1-2765), IBM, a Sloan Fellowship and a Simons Investigator Award. At the time this research was completed, Yuhao Wang and Uma Roy was at the Massachusetts Institute of Technology.

\bibliography{references}

\newpage
\onecolumn

\appendix

\counterwithin{figure}{section}
\counterwithin{table}{section}

\section{Additional discussion of Condition~\ref{cd:eigen}}

In this section, we explain why a sufficient condition for ``$\lambda_{\min} (\Sigma_S) \geq \sigma_{\min}$'' is that all diagonal entries of $\Theta$ scale as constants:

When all diagonal entries of $\Theta$ scale as constants, standard results on the Schur complement yield that all diagonal entries in $(\Sigma_S)^{-1}$ also scale as constants. Hence, $\lambda_{\max} ((\Sigma_S)^{-1}) \leq \textrm{trace}((\Sigma_S)^{-1}) = \sum_{i \in S}[(\Sigma_S)^{-1}]_{ii}$ is also upper bounded by a constant (since $|S| \leq d + 4$). By combining this with the fact that $\lambda_{\min}(\Sigma_S) = \lambda_{\max}^{-1}(\Sigma_S^{-1})$, we can conclude that $\lambda_{\min}(\Sigma_S) \geq \sigma_{\min}$ for a positive constant $\sigma_{\min}$.

\section{Proof of Lemma~\ref{lem:fn}}

\subsection{Characterization of maximal overlaps}

Our proof of Lemma~\ref{lem:fn} relies on the following lemma that characterizes the size of maximal overlaps between any two batches.

\begin{lemma}[Tail-bounds on maximum overlap of subsets] \label{lem:max-overlap}
Consider a set of data $B := \{x^{(i)}\}_{i=1}^N$ with size $N$. Let $B_1, \cdots, B_K \subseteq B$ denote $K$ subsets where each $B_k$ is created by uniformly drawing $M$ samples from the set $B$, then $\forall \epsilon > 0$,
\begin{align*}
\Pr \left( \max_{i, j} |B_i \cup B_j| < \frac{M^2}{N} + \epsilon N \right) \geq 1 - \exp(-2 \epsilon^2 N + 2 \log K).
\end{align*}

\end{lemma}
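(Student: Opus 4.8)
The plan is to reduce the statement to a concentration bound for a single pair of batches and then take a union bound over pairs. First, observe that the size of the overlap can be written as $|B_i \cap B_j| = \sum_{x \in B} \mathbbm{1}[x \in B_i]\,\mathbbm{1}[x \in B_j]$. Since $B_i$ and $B_j$ are drawn independently and each is a uniformly random $M$-subset of $B$, for every $x \in B$ we have $\Pr(x \in B_i \cap B_j) = (M/N)^2$, so by linearity $\mathbb{E}\,|B_i \cap B_j| = N (M/N)^2 = M^2/N$. The target is therefore a one-sided deviation bound for $|B_i \cap B_j|$ above its mean by $\epsilon N$, uniformly over all pairs.

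Second, fix a pair $i \neq j$ and condition on the batch $B_i$. Given $B_i$, the quantity $|B_i \cap B_j|$ is exactly the number of elements of the fixed $M$-set $B_i$ that are captured when $B_j$ is drawn uniformly from $B$; conditionally on $B_i$ it follows a hypergeometric distribution with population size $N$, $M$ marked elements, and $M$ draws, with conditional mean $M^2/N$ that does not depend on the realization of $B_i$. Applying Hoeffding's inequality for sampling without replacement (Hoeffding, 1963) — equivalently, a bounded-differences argument on the random permutation generating $B_j$ — gives $\Pr\!\left(|B_i \cap B_j| - M^2/N \geq Mt \mid B_i\right) \leq \exp(-2Mt^2)$ for all $t > 0$. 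Choosing $Mt = \epsilon N$ and using $M \leq N$ yields $\Pr\!\left(|B_i \cap B_j| \geq M^2/N + \epsilon N \mid B_i\right) \leq \exp(-2\epsilon^2 N^2 / M) \leq \exp(-2\epsilon^2 N)$. Since this conditional bound is uniform over $B_i$, it holds unconditionally. (If batches are drawn with replacement instead, the same step goes through with a binomial in place of the hypergeometric and the bound is unchanged.)

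Third, take a union bound over the at most $\binom{K}{2} \leq K^2$ unordered pairs $\{i,j\}$, obtaining $\Pr\!\left(\max_{i,j} |B_i \cap B_j| \geq M^2/N + \epsilon N\right) \leq K^2 \exp(-2\epsilon^2 N) = \exp(-2\epsilon^2 N + 2\log K)$, and passing to the complementary event gives the claimed bound.

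The only real obstacle is the concentration step: because each batch has fixed size $M$, the indicators $\mathbbm{1}[x \in B_j]$ are negatively associated rather than independent, so one cannot directly apply a Chernoff bound to the sum $\sum_x \mathbbm{1}[x \in B_i]\,\mathbbm{1}[x \in B_j]$ as if the summands were independent. Conditioning on $B_i$ to expose the clean hypergeometric structure, and then invoking Hoeffding's without-replacement inequality, is precisely what circumvents this; the remaining ingredients are linearity of expectation and a union bound. A minor point to settle is whether the batches in Algorithm~\ref{alg:mtp2} are drawn with or without replacement, but as noted the resulting bound is insensitive to this choice.
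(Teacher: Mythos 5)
Your proof is correct, and it takes a genuinely different route from the paper's. The paper leaves the sum $\sum_{\ell}\mathbf{1}\{x^{(\ell)}\in B_i\}\,\mathbf{1}\{x^{(\ell)}\in B_j\}$ unconditioned, computes exactly that $\Pr(y_\ell=1\mid y_{\ell'}=1)=\left(\tfrac{M-1}{N-1}\right)^2\le\left(\tfrac{M}{N}\right)^2=\Pr(y_\ell=1)$ to argue the summands are negatively correlated, and then invokes a Chernoff--Hoeffding bound for negatively associated random variables to obtain $\exp(-2\epsilon^2 N)$ for a single pair, finishing with the same union bound over $\binom{K}{2}$ pairs. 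You instead condition on $B_i$, identify $|B_i\cap B_j|$ as a hypergeometric count with population $N$, $M$ marked elements and $M$ draws, and apply Hoeffding's without-replacement inequality; your intermediate bound $\exp(-2\epsilon^2 N^2/M)$ is in fact slightly sharper per pair before you relax it to $\exp(-2\epsilon^2 N)$ via $M\le N$. Your route buys a real simplification: the inequality the paper cites requires negative association of the whole family $(y_\ell)_\ell$, which is strictly stronger than the pairwise negative correlation the paper actually verifies (negative association does hold here by closure properties of NA families under products over independent blocks, but that step is left implicit), whereas conditioning on $B_i$ exposes a clean classical sampling-without-replacement structure and sidesteps the issue entirely. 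Your reading of the statement's $|B_i\cup B_j|$ as $|B_i\cap B_j|$, and of $\max_{i,j}$ as ranging over $i\neq j$, matches what the paper's own proof actually bounds, and your remark that the bound is insensitive to drawing batches with or without replacement is consistent with how the lemma is used downstream.
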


\begin{proof}
By union bound, we have for any $T > 0$,

\begin{equation} \label{eq:union-bound}
\Pr(\max_{i, j} |B_i \cap B_j| > T) \leq \binom{K}{2} \Pr(|B_i \cap B_j| > T).
\end{equation}

For any $i \neq j$, let the random variable $y_\ell := \mathbf{1} \{ x^{(\ell)} \in B_i \} \cdot \mathbf{1} \{ x^{(\ell)} \in B_j \}$, it follows that $|B_i \cap B_j| = \sum_{l = 1}^{N} y_\ell$ and thus 
\begin{equation} \label{eq:set-trans}
\Pr \left( |B_i \cap B_j| > T \right) = \Pr \left( \sum_{\ell = 1}^{N} y_\ell > T \right).
\end{equation}
In addition, $y_\ell$ is a binary variable satisfying $\Pr(y_{\ell} = 1) = \left( \frac{M}{N} \right)^2$. 

In this case, it suffices to provide an upper bound on the probability $\Pr \left( \sum_{\ell = 1}^{N} y_\ell > T \right)$. Using basic results in combinatorics, one can rewrite the conditional probability $\Pr(y_\ell = 1 | y_{\ell'} = 1)$ as follows:
\begin{equation*}
\Pr(y_\ell = 1 | y_{\ell'} = 1) = \frac{|\{B_i : x^{(\ell')}, x^{(\ell)} \in B_i \}| \cdot |\{B_j : x^{(\ell')}, x^{(\ell)} \in B_j \}|}{|\{B_i : x^{(\ell')} \in B_i \}| \cdot |\{B_j : x^{(\ell')} \in B_j \}|}= \frac{\binom{N-2}{M-2}^2}{\binom{N-1}{M-1}^2}.
\end{equation*}
It follows that
\begin{equation*}
\Pr(y_\ell = 1 | y_{\ell'} = 1) = \frac{\binom{N-2}{M-2}^2}{\binom{N-1}{M-1}^2} = \left(\frac{M-1}{N-1}\right)^2 \leq \left(\frac{M}{N}\right)^2 = \Pr(y_\ell = 1),
\end{equation*}
which means for any $\ell \neq \ell'$, the random variables $y_\ell$ and $y_{\ell'}$ are negatively correlated. By applying Chernoff-Hoeffding bounds on sum of negatively associated random variables (see e.g.~\cite[Theorem~14]{DPR96}), we obtain
\begin{equation}\label{eq:hoeffding}
\Pr \left(\sum_{\ell = 1}^{N} (y_\ell - \E(y_\ell)) > \epsilon N \right) \leq \exp(-2 \epsilon^2 N).
\end{equation}
Combining \eqref{eq:union-bound}, \eqref{eq:set-trans} and \eqref{eq:hoeffding} and that $\E \left(y_\ell \right) = \frac{M^2}{N^2}$, we obtain the statement in the lemma.
\end{proof}

\subsection{Proof of Lemma~\ref{lem:fn}}

\noindent {\bf Notations and proof ideas for Lemma~\ref{lem:fn}.} To simplify notation, we denote each $\hat{\rho}_{ij \mid S_k}$ as $\hat{\rho}_k$ and denote the subset of data points used to estimate $\hat{\rho}_k$ as $B_k$. Let $\hat{\Sigma}_k \in \R^{|S_k| + 2 \times |S_k| + 2}$ denote the sample covariance matrix of the nodes $S_k \cup \{i,j\}$. Note that here $\hat{\Sigma}_k$ is estimated from the data in $B_k$. Let $\hat{\sigma}_k$ denote the vectorized form of $\hat{\Sigma}_k$ and let $\sigma_k$ denote the expectation of $\hat{\sigma}_k$. Standard results in calculating partial correlation coefficients show that $\hat{\rho}_k$ can be taken as a function of $\hat{\sigma}_k$, which we denote as
\begin{align*}
\hat{\rho}_k = g_k(\hat{\sigma}_k).
\end{align*}
Moreover, since the derivatives of all orders of $g_k(\cdot)$ at the point $\hat{\sigma}_k$ can be expressed as polynomials of $\hat{\sigma}_k$ and its inverse (see e.g. Eq.~36 in \citet{WKR14} and the two equations after that), $g_k(\cdot)$ is infinitely differentiable whenever the inputs are non-singular matrices. Let $\ell_k$ denote the first order derivative of $g_k$ at the point $\sigma_k$. It follows that $\ell_k (\hat{\sigma}_k - \sigma_k)$ is the first order approximation of $g_k(\hat{\sigma}_k)$. Let the residual 
\begin{align}\label{eq:fn2}
r_k := g_k(\hat{\sigma}_k) - \ell_k(\hat{\sigma}_k - \sigma_k).
\end{align}
Let $\|\hat{\sigma}_k - \sigma_k\|_\infty$ denote the $\ell_\infty$ norm of the vector $\hat{\sigma}_k - \sigma_k$. Standard results in Taylor expansion show that when $\|\hat{\sigma}_k - \sigma_k\|_\infty$ is negligible, one can rewrite the residual as
\begin{align*}
r_k = \frac{1}{2}(\hat{\sigma}_k - \sigma_k)^T H_k(\tilde{\sigma}_k) (\hat{\sigma}_k - \sigma_k),
\end{align*}
where $H_k(\cdot)$ is the Hessian matrix of $g_k$ and $\tilde{\sigma}_k$ is some point in the middle between $\hat{\sigma}_k$ and $\sigma_k$. Let $\brh := (\hat{\rho}_1, \cdots, \hat{\rho}_K)^T$, $\bL := (\ell_1(\hat{\sigma}_1 - \sigma_1), \cdots,\ell_K(\hat{\sigma}_K - \sigma_K))^T$ and $\bR := (r_1, \cdots, r_K)^T$. Since each $\hat{\sigma}_k$ is estimated using a subset of data with batch size $M$, there may be overlaps between the set of data used to calculate different $\hat{\sigma}_k$'s. Let $\hat{\sigma}_k^{(1)}$ denote the sample covariance matrix estimated from the data in $B_k \setminus \big(\underset{k' \neq k}{\cup} B_{k'}\big)$ and let $\hat{\sigma}_k^{(2)}$ denote the sample covariance matrix estimated from the data in the overlaps, i.e., the data in $B_k \cap \big(\underset{k' \neq k}{\cup} B_{k'}\big)$. Then one can decompose $\hat{\sigma}_k$ as $\hat{\sigma}_k = \frac{M-T_k}{M} \hat{\sigma}_k^{(1)} + \frac{T_k}{M} \hat{\sigma}_k^{(2)}$, where $T_k$ is the size of data in the overlaps. It is obvious that the $\hat{\sigma}_k^{(1)}$'s are independent from each other. Based on the above decomposition, we denote $\bL = \bL^{(1)} + \bL^{(2)}$, where 
\begin{align*}
\bL^{(1)} := \Big(\frac{M-T_k}{M} \ell_1(\hat{\sigma}_1^{(1)} - \sigma_1), \cdots, \frac{M-T_k}{M} \ell_K(\hat{\sigma}_K^{(1)} - \sigma_K)\Big)^T
\end{align*}
and
\begin{align*}
\bL^{(2)} := \Big(\frac{T_k}{M} \ell_1(\hat{\sigma}_1^{(2)} - \sigma_1), \cdots, \frac{T_k}{M} \ell_K(\hat{\sigma}_K^{(2)} - \sigma_K)\Big)^T.
\end{align*}
In addition, for any vector $\mathbf{a}$, we write $\mathbf{a} \geq 0$ whenever all elements of the vector $\mathbf{a}$ are greater than or equal to zero.

Let the random event 
\begin{align*}
\mathcal{B} := \Big\{(B_1, \cdots, B_K) : \max_{k, k' \in [K]} |B_k \cap B_{k'}| \leq 2 \frac{M^2}{N}\Big\}.
\end{align*}
By applying Lemma~\ref{lem:max-overlap}, it follows that there exists some positive constant $C$ that depends on $\gamma$ such that $\Pr(\mathcal{B}) \geq 1 - \exp(-C N^{4 \gamma - 3})$. By combining this with the decomposition, we have
\begin{align*}
\Pr(\brh \geq 0) & = \Pr(\brh \geq 0, \mathcal{B}) + \Pr(\brh \geq 0, \neg\mathcal{B}) \leq \Pr(\brh \geq 0 \mid \mathcal{B})\Pr(\mathcal{B}) + \Pr(\neg\mathcal{B}) \\
& \leq \Pr(\brh \geq 0 \mid \mathcal{B}) + \Pr(\neg\mathcal{B}),
\end{align*}
where $\neg\mathcal{B}$ denotes the complement of the random event $\mathcal{B}$. It is sufficient to prove Lemma~\ref{lem:fn} by proving
\begin{align}\label{eq:fn1}
\Pr(\brh \geq 0 \mid \mathcal{B}) \leq \exp(-C N^{\frac{1 - \gamma}{2}})
\end{align}
for some positive constant $C$ that depends on $\sigma_{\max}$, $\sigma_{\min}$ and $d$. In other words, it remains to prove that $\Pr(\brh \geq 0) \leq \exp(-C N^{\frac{1 - \gamma}{2}})$ when we are under a \emph{particular} subsampling assignment $(B_1, \cdots, B_K)$ that is in the random event $\mathcal{B}$.

\noindent {\bf Preliminary lemmas for Lemma~\ref{lem:fn}.}

Since the only remaining task is to deal with Eq.~\eqref{eq:fn1}, for the remainder of the proof of Lemma~\ref{lem:fn} we can assume that we are under a \emph{particular} subsampling assignment $(B_1, \cdots, B_K)$ in $\mathcal{B}$. To simplify notation we omit ``$\mid \mathcal{B}$'' in the remainder of the proof.

\begin{lemma}\label{lem:concentration}
For all $\epsilon > 0$, there exists some positive constant $C$ that depends on $d$, $\sigma_{\max}$ and $\sigma_{\min}$ such that the following inequality holds:
\begin{align*}
\Pr(\|\hat{\sigma}_k - \sigma_k\|_\infty > \epsilon) \leq 2 (d + 2)^2 e^{-CM\epsilon^2}.
\end{align*}
\end{lemma}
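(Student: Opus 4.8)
The plan is to reduce the statement to a standard concentration bound for entries of a sample covariance matrix estimated from $M$ i.i.d.\ samples, noting that $\hat{\sigma}_k$ is the vectorized sample covariance matrix of the $(|S_k|+2)$-dimensional subvector $X_{S_k \cup \{i,j\}}$, which has at most $(d+2)^2$ entries since $|S_k| \leq d$. Each entry of $\hat{\sigma}_k - \sigma_k$ has the form $\frac{1}{M}\sum_{\ell \in B_k} (X^{(\ell)}_a X^{(\ell)}_b - \E[X_a X_b])$ for some pair of coordinates $a,b$ in $S_k \cup \{i,j\}$. First I would observe that each $X_a$ is a mean-zero Gaussian with variance $\Sigma_{aa} \leq \lambda_{\max}(\Sigma_{\{a\}}) \leq \sigma_{\max}$ by Condition~\ref{cd:eigen}, so each product $X_a X_b$ is sub-exponential with parameters controlled by $\sigma_{\max}$ (a product of two sub-Gaussian random variables with variance proxy $O(\sigma_{\max})$ is sub-exponential with parameter $O(\sigma_{\max})$).

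Next I would apply Bernstein's inequality for sums of i.i.d.\ centered sub-exponential random variables: for a fixed pair $(a,b)$,
\begin{align*}
\Pr\big(|(\hat{\sigma}_k)_{ab} - (\sigma_k)_{ab}| > \epsilon\big) \leq 2\exp\big(-c\, M \min(\epsilon^2/\sigma_{\max}^2, \epsilon/\sigma_{\max})\big)
\end{align*}
for an absolute constant $c$. For $\epsilon$ in the regime where the bound is meaningful (in particular $\epsilon \lesssim \sigma_{\max}$, which is the relevant range since larger $\epsilon$ makes the claimed bound vacuous once $C$ absorbs constants), the quadratic term dominates, giving a bound of the form $2 e^{-C M \epsilon^2}$ with $C$ depending only on $\sigma_{\max}$. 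Taking a union bound over the at most $(d+2)^2$ distinct entries then yields $\Pr(\|\hat{\sigma}_k - \sigma_k\|_\infty > \epsilon) \leq 2(d+2)^2 e^{-CM\epsilon^2}$, which is exactly the claimed form; the dependence on $d$ enters only through the prefactor $(d+2)^2$ and, if one wants the constant $C$ uniform over all $\epsilon>0$, through rescaling to handle the linear Bernstein regime.

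I expect the only mild subtlety to be bookkeeping the sub-exponential parameters: one must confirm that the relevant constants depend on $\Sigma$ only through $\sigma_{\max}$ (an upper bound on diagonal entries, hence on all second moments of the products via Cauchy--Schwarz), and that $\sigma_{\min}$ plays no role here — it is listed in the statement presumably for uniformity with the other lemmas, or it enters when this concentration estimate is later combined with bounds on $g_k$ and its derivatives, where invertibility of $\Sigma_{S_k \cup \{i,j\}}$ (guaranteed by $\lambda_{\min} \geq \sigma_{\min}$) matters. The proof itself is routine Gaussian concentration; the main thing to get right is ensuring the final constant $C$ can be chosen depending only on $(d,\sigma_{\max},\sigma_{\min})$ and uniformly in $\epsilon$, which is handled by the standard trick of splitting into the $\epsilon \leq \sigma_{\max}$ and $\epsilon > \sigma_{\max}$ cases and noting the bound is trivially true (after adjusting $C$) in the latter.
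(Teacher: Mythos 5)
Your argument is correct in substance and is essentially the standard derivation of the bound the paper invokes: the paper's entire proof is a one-line citation of \cite[Lemma~7]{WKR14} together with Gaussianity, and that cited lemma is precisely the entrywise sub-exponential/Bernstein concentration for sample covariances that you reconstruct (product of sub-Gaussians is sub-exponential, Bernstein, union bound over the at most $(d+2)^2$ entries, with $\Sigma_{aa}\leq\sigma_{\max}$ supplied by Condition~\ref{cd:eigen} and $\sigma_{\min}$ indeed playing no role here). So you have not taken a different route; you have simply filled in the content the paper outsources. One caveat on your final remark: for $\epsilon>\sigma_{\max}$ the claimed bound is \emph{not} ``trivially true after adjusting $C$,'' because the genuine tail of a sub-exponential average is of order $e^{-cM\epsilon}$, which cannot be dominated by $e^{-CM\epsilon^{2}}$ uniformly over all large $\epsilon$ for any fixed $C>0$; strictly speaking the lemma as stated (and as cited by the paper) only holds for $\epsilon$ in a bounded range, e.g.\ $\epsilon\lesssim\sigma_{\max}$. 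This does not affect the downstream use, since every application in the paper takes $\epsilon$ of order $M^{-1/4}$ or $M^{-1/2}$, squarely in the quadratic regime, but it is a genuine (if harmless) imprecision shared by your write-up and the paper's statement alike.
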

\begin{proof}
This is a direct consequence of~\cite[Lemma~7]{WKR14} and the Gaussianity of the underlying distribution.
\end{proof}
\begin{lemma}\label{lem:residual}
For all $\epsilon > 0$, there exist positive constants $C_1$ and $C_2$ that depend on $\sigma_{\max}$, $\sigma_{\min}$ and $d$ such that
\begin{align*}
\Pr(\|\bR\|_\infty \leq \epsilon) \geq 1 - 2 (d+2)^2 e^{\frac{1 - \gamma}{2} \log N - C_1 M \epsilon} - 2 (d+2)^2 e^{\frac{1 - \gamma}{2} \log N - C_2 \sqrt{M}}.
\end{align*}
\end{lemma}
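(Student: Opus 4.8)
\textbf{Proof proposal for Lemma~\ref{lem:residual}.}

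The plan is to bound $\|\bR\|_\infty$ by reducing control of each residual $r_k$ to a concentration statement about $\|\hat\sigma_k - \sigma_k\|_\infty$, then take a union bound over $k \in [K]$ with $K = N^{(1-\gamma)/2}$. Recall from \eqref{eq:fn2} that $r_k = g_k(\hat\sigma_k) - \ell_k(\hat\sigma_k - \sigma_k)$ is the Taylor remainder of $g_k$ at $\sigma_k$, and that when $\|\hat\sigma_k - \sigma_k\|_\infty$ is small one has the exact second-order form $r_k = \tfrac12 (\hat\sigma_k - \sigma_k)^T H_k(\tilde\sigma_k)(\hat\sigma_k - \sigma_k)$ for an intermediate point $\tilde\sigma_k$. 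First I would fix a small absolute constant $\delta > 0$ (depending on $\sigma_{\min},\sigma_{\max},d$) such that on the event $\{\|\hat\sigma_k - \sigma_k\|_\infty \le \delta\}$ the matrix $\tilde\sigma_k$ stays in a compact neighbourhood of $\sigma_k$ on which all the relevant submatrices are non-singular with smallest eigenvalue bounded below; since $g_k$ is infinitely differentiable there (as noted in the excerpt, via the polynomial-in-$\hat\sigma_k$-and-its-inverse expressions), the operator norm $\|H_k(\tilde\sigma_k)\|$ is uniformly bounded by a constant $C_H = C_H(\sigma_{\min},\sigma_{\max},d)$ over this neighbourhood and over all $k$ (the dimension of the matrices involved is at most $d+4$, so there are only finitely many ``shapes'' of $g_k$ to consider). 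On this event, $|r_k| \le \tfrac12 C_H (d+2)^2 \|\hat\sigma_k - \sigma_k\|_\infty^2$ — a bound of the form $|r_k| \le C' \|\hat\sigma_k-\sigma_k\|_\infty^2$.

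Given this quadratic control, for any target $\epsilon > 0$ I would split into two regimes according to the size of $\epsilon$. When $\epsilon$ is small enough that $\sqrt{\epsilon/C'} \le \delta$, the event $\{\|\hat\sigma_k - \sigma_k\|_\infty \le \sqrt{\epsilon/C'}\}$ forces $|r_k| \le \epsilon$, and Lemma~\ref{lem:concentration} gives $\Pr(\|\hat\sigma_k - \sigma_k\|_\infty > \sqrt{\epsilon/C'}) \le 2(d+2)^2 e^{-C M \epsilon/C'}$, which is the first exponential term $2(d+2)^2 e^{\frac{1-\gamma}{2}\log N - C_1 M \epsilon}$ after absorbing the union-bound factor $K = e^{\frac{1-\gamma}{2}\log N}$. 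When instead $\epsilon \ge C'\delta^2$, the simpler event $\{\|\hat\sigma_k - \sigma_k\|_\infty \le \delta\}$ already suffices (since then $|r_k| \le C'\delta^2 \le \epsilon$), and Lemma~\ref{lem:concentration} with threshold $\delta$ gives a bound $2(d+2)^2 e^{-C M \delta^2}$; writing $\sqrt{M}$ in place of $M\delta^2$ (valid for $M$ large, since $\delta^2\sqrt{M}\to\infty$, and more than enough for the stated rate) yields the second exponential term $2(d+2)^2 e^{\frac{1-\gamma}{2}\log N - C_2\sqrt M}$ after the union bound. Keeping both terms — rather than taking whichever is smaller — gives a single statement valid for all $\epsilon>0$, which is the form used downstream. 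Then $\Pr(\|\bR\|_\infty \le \epsilon) = 1 - \Pr(\exists k: |r_k| > \epsilon) \ge 1 - \sum_{k=1}^K \Pr(|r_k|>\epsilon)$, and substituting the per-$k$ bounds finishes the argument.

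I expect the main obstacle to be making the uniformity of the Hessian bound fully rigorous: one must argue that the constant $C_H$ can be chosen independently of $k$ and of $p$. The key observations that make this work are (i) each conditioning set has $|S_k| \le d+2$ so the matrices $\hat\sigma_k, \sigma_k$ have dimension at most $d+4$, bounded by a constant; (ii) Condition~\ref{cd:eigen} pins the eigenvalues of every such $\Sigma_S$ between $\sigma_{\min}$ and $\sigma_{\max}$, so $\sigma_k$ lives in a fixed compact set of non-singular matrices, and a small enough $\delta$-ball around it stays bounded away from singularity uniformly; and (iii) on that compact set the Hessian of $g_k$ — an explicit rational function of the matrix entries with non-vanishing denominator — is continuous, hence bounded. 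A secondary (routine) technical point is verifying that the ``$\|\hat\sigma_k-\sigma_k\|_\infty$ negligible'' hypothesis needed for the exact remainder formula is automatically in force on the events we condition on, which is precisely why we intersect with $\{\|\hat\sigma_k-\sigma_k\|_\infty \le \delta\}$ at the outset.
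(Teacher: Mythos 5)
Your proposal is correct and matches the paper's proof in all essentials: both control the Taylor remainder $r_k$ via the exact second-order form $\tfrac12(\hat\sigma_k-\sigma_k)^T H_k(\tilde\sigma_k)(\hat\sigma_k-\sigma_k)$ with a Hessian uniformly bounded on a small neighbourhood of $\sigma_k$ (using Condition~\ref{cd:eigen} and the bounded dimension $\le d+4$), reduce everything to concentration of $\|\hat\sigma_k-\sigma_k\|_\infty$ via Lemma~\ref{lem:concentration}, and union-bound over the $K=N^{(1-\gamma)/2}$ batches to produce the $e^{\frac{1-\gamma}{2}\log N}$ prefactor. The only cosmetic difference is that the paper truncates at the $M$-dependent radius $M^{-1/4}$, which yields the $e^{-C_2\sqrt{M}}$ term directly without any case split on $\epsilon$, whereas you truncate at a fixed $\delta$ and split according to whether $\epsilon\le C'\delta^2$; both routes give the stated bound.
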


\begin{proof}
For each $r_k$, let $C_1 - C_3$ denote a positive constant that depends on $\sigma_{\min}$, $\sigma_{\max}$ and $d$ and may vary from line to line. We have that
\begin{align}\label{eq:residual1}
\Pr(|r_k| > \epsilon) &= \Pr(|r_k| > \epsilon, \|\hat{\sigma}_k - \sigma_k\|_\infty \leq M^{-1/4}) + \Pr(|r_k| > \epsilon, \|\hat{\sigma}_k - \sigma_k\|_\infty \geq M^{-1/4}) \nonumber\\
&\leq \Pr((|r_k| > \epsilon, \|\hat{\sigma}_k - \sigma_k\|_\infty \leq M^{-1/4}) + \Pr(\|\hat{\sigma}_k - \sigma_k\|_\infty \geq M^{-1/4}).
\end{align}
Under the random event where $\|\hat{\sigma}_k - \sigma_k\|_\infty \leq M^{-1/4}$, standard results in Taylor expansion show that $r_k$ can be expressed in the form $r_k = (\hat{\sigma}_k - \sigma_k)^T H_k(\tilde{\sigma}_k) (\hat{\sigma}_k - \sigma_k)$. Thus one can rewrite~\eqref{eq:residual1} as
\begin{align*}
\Pr(|r_k| > \epsilon) & \leq \Pr\Big(\Big|\frac{1}{2}(\hat{\sigma}_k - \sigma_k)^T H_k(\tilde{\sigma}_k) (\hat{\sigma}_k - \sigma_k)\Big| > \epsilon, \|\hat{\sigma}_k - \sigma_k\|_\infty \leq M^{-1/4}\Big) \\
&\quad\quad + \Pr(\|\hat{\sigma}_k - \sigma_k\|_\infty \geq M^{-1/4}).
\end{align*}
Under the random event $\|\hat{\sigma}_k - \sigma_k\|_\infty \leq M^{-1/4}$, $\tilde{\sigma}_k$ is in the middle of $\hat{\sigma}_k$ and $\sigma_k$. It follows that $\|\tilde{\sigma}_k - \sigma_k\|_\infty \leq M^{-1/4}$. By combining this with the fact that the Hessian function $H_k(\cdot)$ is infinitely differentiable at the point $\sigma_k$, there exists some positive constant $C_1$ such that $\|H_k(\tilde{\sigma}_k) - H_k(\sigma_k)\|_\infty \leq C_1$. Using that $\|H_k(\sigma_k)\|_\infty$ is also bounded by a positive constant (since it is a function of $\sigma_k$, see e.g. \cite[Section~6.5]{WKR14} and \cite[Page 185]{MN88} for the explicit form), we further obtain that $\|H_k(\tilde{\sigma}_k)\|_\infty \leq C_1$. As a consequence, one can further rewrite~\eqref{eq:residual1} as
\begin{align*}
\Pr(|r_k| > \epsilon) &\leq \Pr(d \sqrt{C_1} \|\hat{\sigma}_k - \sigma_k\|_\infty > \sqrt{\epsilon}, \|\hat{\sigma}_k - \sigma_k\|_\infty \leq M^{-1/4}) \\
& \quad\quad + \Pr(\|\hat{\sigma}_k - \sigma_k\|_\infty \geq M^{-1/4})\\
& \leq \Pr(d\sqrt{C_1} \|\hat{\sigma}_k - \sigma_k\|_\infty > \sqrt{\epsilon}) + \Pr(\|\hat{\sigma}_k - \sigma_k\|_\infty \geq M^{-1/4}).
\end{align*}
By applying Lemma~\ref{lem:concentration}, we conclude that $\Pr(|r_k| > \epsilon) \leq 2 (d+2)^2 e^{-C_2 M \epsilon} + 2 (d+2)^2 e^{-C_3 \sqrt{M}}$. By taking the union bound over all $k \in [K]$, we obtain the desired statement in the lemma.
\end{proof}
\begin{lemma}\label{lem:overlap-cov}
Let $T := \max_k T_k$. For all $\epsilon > 0$, there exists some positive constant $C$ that depends on $\sigma_{\max}$, $\sigma_{\min}$ and $d$ such that
\begin{align*}
\Pr(\|\bL^{(2)}\|_\infty \leq \epsilon) \geq 1 - 2(d + 2)^2 e^{\frac{1 - \gamma}{2} \log N - C \frac{M^2}{T} \epsilon^2}.
\end{align*}
\end{lemma}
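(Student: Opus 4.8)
The plan is to follow the template of the proof of Lemma~\ref{lem:residual}: bound each coordinate of $\bL^{(2)}$ in probability, then take a union bound over $k\in[K]$. Throughout this subsection the subsampling assignment $(B_1,\dots,B_K)\in\mathcal{B}$ is fixed, so every overlap index set and its cardinality $T_k$ is deterministic and the only randomness is in the data. Fix $k\in[K]$; if $T_k=0$ the $k$-th entry of $\bL^{(2)}$ vanishes identically, so assume $T_k\geq 1$. The first step is to linearize: since $\ell_k$ is a linear form in the entries of its argument whose coefficients are polynomials in $\sigma_k$ and $\sigma_k^{-1}$, and hence bounded by a constant $c=c(\sigma_{\min},\sigma_{\max},d)$ by Condition~\ref{cd:eigen} (as $|S_k\cup\{i,j\}|\leq d+4$), we get $|\ell_k(\hat{\sigma}_k^{(2)}-\sigma_k)|\leq c\,\|\hat{\sigma}_k^{(2)}-\sigma_k\|_\infty$, and therefore
\[
\Pr\big(|\bL^{(2)}_k|>\epsilon\big)\;=\;\Pr\!\left(\Big|\tfrac{T_k}{M}\,\ell_k(\hat{\sigma}_k^{(2)}-\sigma_k)\Big|>\epsilon\right)\;\leq\;\Pr\!\left(\|\hat{\sigma}_k^{(2)}-\sigma_k\|_\infty>\frac{M\epsilon}{c\,T_k}\right).
\]

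The second step is the concentration bound. The key observation is that, with the subsampling fixed and independent of the data, $\hat{\sigma}_k^{(2)}$ is precisely the sample covariance matrix of the $T_k$ i.i.d.\ Gaussian vectors indexed by the overlap set of batch $k$, so Lemma~\ref{lem:concentration} applies with sample size $T_k$ in place of $M$. This gives
\[
\Pr\!\left(\|\hat{\sigma}_k^{(2)}-\sigma_k\|_\infty>\frac{M\epsilon}{c\,T_k}\right)\;\leq\;2(d+2)^2\exp\!\left(-C'\,T_k\cdot\frac{M^2\epsilon^2}{c^2 T_k^2}\right)\;=\;2(d+2)^2\exp\!\left(-\frac{C'}{c^2}\cdot\frac{M^2\epsilon^2}{T_k}\right).
\]
Since $T_k\leq T$, the exponent is at most $-\tfrac{C'}{c^2}\tfrac{M^2\epsilon^2}{T}$, so $\Pr(|\bL^{(2)}_k|>\epsilon)\leq 2(d+2)^2\exp\!\big(-C\tfrac{M^2}{T}\epsilon^2\big)$ with $C=C'/c^2$ depending only on $(\sigma_{\min},\sigma_{\max},d)$. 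Finally, taking a union bound over the $K=N^{(1-\gamma)/2}$ coordinates gives $\Pr(\|\bL^{(2)}\|_\infty>\epsilon)\leq 2(d+2)^2\exp\!\big(\tfrac{1-\gamma}{2}\log N-C\tfrac{M^2}{T}\epsilon^2\big)$, and passing to the complement proves the lemma.

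The argument is largely routine, and the step I expect to require the most care is the bookkeeping with $T_k$ versus $T$: one must keep the factor $T_k/M$ intact when passing to the event $\{\|\hat{\sigma}_k^{(2)}-\sigma_k\|_\infty>M\epsilon/(cT_k)\}$, so that the two powers of $T_k$ cancel after Lemma~\ref{lem:concentration} is invoked and exactly $M^2\epsilon^2/T_k$ survives in the exponent, which is then bounded below by $M^2\epsilon^2/T$. Replacing $T_k$ by $T$ at an earlier stage would instead leave only $M^2\epsilon^2/T^2$ in the exponent, which is too weak since $T\geq 1$. The remaining ingredients — the bound on the coefficients of $\ell_k$ from Condition~\ref{cd:eigen}, the application of Lemma~\ref{lem:concentration} with reduced sample size $T_k$, and the union bound over $[K]$ — all parallel the proof of Lemma~\ref{lem:residual}.
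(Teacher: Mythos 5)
Your proof is correct and follows essentially the same route as the paper: bound $\|\ell_k\|_1$ by a constant depending on $(\sigma_{\min},\sigma_{\max},d)$, apply Lemma~\ref{lem:concentration} to $\hat{\sigma}_k^{(2)}$ with effective sample size $T_k$ so that the two powers of $T_k$ cancel and $M^2\epsilon^2/T_k$ survives in the exponent, then use $T_k\leq T$ and a union bound over the $K=N^{(1-\gamma)/2}$ coordinates. The extra care you take with the $T_k$-versus-$T$ bookkeeping and the degenerate case $T_k=0$ is exactly the implicit content of the paper's argument.
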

\begin{proof}
For each $\hat{\sigma}_k^{(2)}$, it follows from Lemma~\ref{lem:concentration} that for all $\epsilon > 0$, there exists some positive constant $C$ that depends on $\sigma_{\max}, \sigma_{\min}$ as well as $d$ such that
\begin{align*}
\Pr(|\ell_k (\hat{\sigma}_k^{(2)} - \sigma_k)| > \epsilon) \leq \Pr(\|\ell_k\|_1 \|\hat{\sigma}_k^{(2)} - \sigma_k\|_\infty > \epsilon) \leq 2(d + 2)^2 e^{-C T_k \epsilon^2},
\end{align*}
where the term $\|\ell_k\|_1$ is absorbed into the positive constant $C$ since $\|\ell_k\|_1$ is a constant that depends on $\sigma_{\max}, \sigma_{\min}$ and $d$. By taking the union bound and using that $T_k \leq T$, we obtain
\begin{align*}
\Pr(\|\bL^{(2)}\|_\infty > \epsilon) & \leq \sum_{k=1}^K \Pr\big( \frac{T_k}{M} |\ell_k (\hat{\sigma}_k^{(2)} - \sigma_k)| > \epsilon\big) \leq 2(d + 2)^2 N^{\frac{1 - \gamma}{2}} e^{-C \frac{M^2}{T_k} \epsilon^2} \\
& \leq 2(d + 2)^2 N^{\frac{1 - \gamma}{2}} e^{-C \frac{M^2}{T} \epsilon^2},
\end{align*}
which completes the proof.
\end{proof}

With these preparations we can now prove Lemma~\ref{lem:fn}.
\begin{proof}[Proof of Lemma~\ref{lem:fn}]
Let $C_1-C_6$ denote positive constants that depend on $\sigma_{\min}$, $\sigma_{\max}$ and $d$ and may vary from line to line. For any $\epsilon > 0$, standard results in probability yield that
\begin{align*}
\Pr(\brh \geq 0) & = \Pr(\brh \geq 0, \|\bR\|_\infty \leq \epsilon) + \Pr(\brh \geq 0, \|\bR\|_\infty \geq \epsilon) \nonumber \\
& \leq \Pr(\bL + \bR \geq 0, \|\bR\|_\infty \leq \epsilon) + \Pr(\|\bR\|_\infty \geq \epsilon)\nonumber \\
& \leq \Pr(\bL \geq -\epsilon, \|\bR\|_\infty \leq \epsilon) + \Pr(\|\bR\|_\infty \geq \epsilon) \leq \Pr(\bL \geq -\epsilon) + \Pr(\|\bR\|_\infty \geq \epsilon).
\end{align*}
Then using the decomposition that $\bL = \bL^{(1)} + \bL^{(2)}$, it follows from the same derivation as the above inequality that for any $\epsilon > 0$,
\begin{align*}
\Pr(\brh \geq 0) & \leq \Pr(\bL \geq -\epsilon) + \Pr(\|\bR\|_\infty \geq \epsilon) \\
&\leq \Pr(\bL^{(1)} \geq -2\epsilon) + \Pr(\|\bL^{(2)}\|_\infty \geq \epsilon) + \Pr(\|\bR\|_\infty \geq \epsilon).
\end{align*}
Then by choosing $\epsilon = \frac{1}{2 \sqrt{M}}$, it follows directly from Lemmas~\ref{lem:residual} and~\ref{lem:overlap-cov} that there exist positive constants $C_1, C_2$ and $C_3$ such that
\begin{align} \label{eq:fn}
\Pr(\brh \geq 0) \leq & \Pr(\bL^{(1)} \geq - \frac{1}{\sqrt{M}}) + 2 (d + 2)^2 e^{\frac{1 - \gamma}{2} \log N - C_1 \sqrt{M}} \\
& + 2 (d + 2)^2 e^{\frac{1 - \gamma}{2} \log N - C_2 \sqrt{M}} + 2 (d + 2)^2 e^{\frac{1 - \gamma}{2} \log N - C_3 \frac{M}{T}}. \nonumber
\end{align}
Using that the subsampling assignment is from the random event $\mathcal{B}$, it follows that $T \leq \frac{2 M^2}{N} \cdot N^{\frac{1 - \gamma}{2}}$. By combining this with~\eqref{eq:fn} and the fact that $M = N^\gamma$, we obtain
\begin{align} \label{eq:fn2}
\Pr(\brh \geq 0) & \leq \Pr(\bL^{(1)} \geq - \frac{1}{\sqrt{M}}) + e^{\log (2 (d + 2)^2) + \frac{1 - \gamma}{2} \log N - C_1 N^{\gamma / 2}} \\
& \quad\quad + e^{\log (2 (d + 2)^2) + \frac{1 - \gamma}{2} \log N - C_2 N^{\gamma / 2}} + e^{\log (2 (d + 2)^2) + \frac{1 - \gamma}{2} \log N - C_3 N^{\frac{1 - \gamma}{2}}}. \nonumber
\end{align}
Then using $\log N = o( N^{\gamma / 2 \wedge \frac{1 - \gamma}{2}})$ and $\log (2 (d + 2)^2) = o(N^{\gamma / 2 \wedge \frac{1 - \gamma}{2}})$, we can absorb the terms $\log (2 (d + 2)^2)$ and $\frac{1 - \gamma}{2} \log N$ into $ N^{\gamma / 2}$ and $N^{\frac{1 - \gamma}{2}}$ respectively and obtain
\begin{align*}
\Pr(\brh \geq 0) \leq \Pr(\bL^{(1)} \geq - \frac{1}{\sqrt{M}}) + e^{-C_1 N^{\gamma / 2}} + e^{-C_2 N^{\frac{1 - \gamma}{2}}}. 
\end{align*}
It remains to bound the term $\Pr(\bL^{(1)} \geq -\frac{1}{\sqrt{M}})$. Since all the $\hat{\sigma}_k^{(1)}$'s are independent random vectors, we have
\begin{align*}
\Pr(\bL^{(1)} \geq -\frac{1}{\sqrt{M}}) & = \prod_{k=1}^K \Pr(\frac{M - T_k}{M}\ell_k (\hat{\sigma}_k^{(1)} - \sigma_k) \geq -\frac{1}{\sqrt{M}}) \\
& \leq \prod_{k=1}^K \Pr(\ell_k (\hat{\sigma}_k^{(1)} - \sigma_k) \geq -\frac{2}{\sqrt{M}}),
\end{align*}
where the last inequality is based on the fact that $T_k \ll M$ on the event $\mathcal{B}$ and therefore $\frac{M - T_k}{M} \geq \frac{1}{2}$. Let $\nu_k := M \cdot \textrm{var}(\ell_k (\hat{\sigma}_k^{(1)} - \sigma_k))$. By further applying the standard Berry-Essen theorem, we obtain
\begin{align*}
|\Pr(\ell_k (\hat{\sigma}_k^{(1)} - \sigma_k) \geq - \frac{2}{\sqrt{M}}) - \Pr(Z \geq - 2 / \sqrt{\nu_k}) | \leq C_5 / \sqrt{M},
\end{align*}
where $Z$ represents a standard Gaussian random variable. Using that $\ell_k (\hat{\sigma}_k^{(1)} - \sigma_k)$ can be expressed as the mean of $M - T_k$ independent random variables and that $T_k \ll M$, we obtain that there exists some positive constant $C_4$ such that for all $k \in [K]$, $\nu_k \geq C_4$. Hence, $\Pr(Z \geq - 2 / \sqrt{\nu_k}) \leq \Pr(Z \geq - 2 / \sqrt{C_4})$ and
\begin{align*}
\Pr(\ell_k (\hat{\sigma}_k^{(1)} - \sigma_k) \geq - \frac{2}{\sqrt{M}}) \leq \Pr(Z \geq - 2 / \sqrt{C_4}) + C_5 / \sqrt{M} \leq C_6
\end{align*}
for some positive constant $C_6 < 1$. Hence, one can rewrite~\eqref{eq:fn2} as
\begin{align*}
\Pr(\brh \geq 0) \leq (C_6)^K + e^{-C_1 N^{\gamma / 2}} + e^{-C_2 N^{\frac{1 - \gamma}{2}}},
\end{align*}
which finally yields
\begin{align*}
\Pr(\brh \geq 0) \leq e^{- (\log \frac{1}{C_6}) \cdot N^{\frac{1 - \gamma}{2}}} + e^{-C_1 N^{\gamma / 2}} + e^{-C_2 N^{\frac{1 - \gamma}{2}}}
\end{align*}
under the random event $\mathcal{B}$, which completes the proof.
\end{proof}

\section{Proof of Theorem~\ref{thm:fp}}

\begin{proof}[Proof of Theorem~\ref{thm:fp}]
For any $i \neq j$, without loss of generality, we assume that $|\adj_i(G)| \leq |\adj_j(G)|$. Also, let $S_{ij} := \adj_i(G) \setminus \{j\}$. We denote the random event $\mathcal{A}$ by:
\begin{align*}
\mathcal{A} := \Big\{\textrm{for any} \;(i,j) \not\in G, \exists t \in [p] \setminus S_{ij} \cup \{i,j\} \;\textrm{such that}\; \hat{\rho}_{i,j \mid S_{ij} \cup \{t\}} \leq 0\Big\}.
\end{align*}
Similarly, for each $(i,j) \not\in G$, we let
\begin{align*}
\mathcal{A}_{ij} := \Big\{\exists t \in [p] \setminus S_{ij} \cup \{i,j\} \;\textrm{such that}\; \hat{\rho}_{i,j \mid S_{ij} \cup \{t\}} \leq 0\Big\}.
\end{align*}
Let $t_1, \cdots, t_K \in [p] \setminus S_{ij} \cup \{i,j\}$ denote a list of nodes with size $K = N^{\frac{1 - \gamma}{2}}$ (this is a valid choice since Condition~\ref{cd:size} gives us that $p \geq  N^{\frac{1 - \gamma}{2}} + d + 2$ for any $\gamma \in (\frac{3}{4}, 1)$). It is straightforward to show that $\rho_{ij \mid S_{ij} \cup \{t_k\}} = 0$ for all $k \in [K]$. Then by setting each $S_k$ in Lemma~\ref{lem:fn} as $S_k := S_{ij} \cup \{t_k\}$, it follows from Lemma~\ref{lem:fn} that with probability at least $1 - \exp(-CN^{\frac{1 - \gamma}{2} \wedge 4\gamma - 3})$, there exists some $t_k$ such that $\hat{\rho}_{i,j \mid S_{ij} \cup \{t_k\}} \leq 0$, which yields $\Pr(\mathcal{A}_{ij}) \geq 1 - \exp(-CN^{\frac{1 - \gamma}{2} \wedge 4\gamma - 3})$. By taking the union bound over all the edges $(i,j) \not\in G$, we obtain that $\Pr(\mathcal{A}) \geq 1 - p^2e^{-C\frac{1 - \gamma}{2} \wedge 4 \gamma - 3}$.

Thus, to complete the proof of the theorem, it remains to prove that under the random event $\mathcal{A}$, all edges $(i,j) \not\in G$ are deleted by Algorithm~\ref{alg:mtp2} when the algorithm is at iteration $\ell = d + 1$. We prove this by contradiction. Suppose there exists an edge $(i,j) \not\in G$ that is not deleted by the algorithm at $\ell = d + 1$. By applying Theorem~\ref{thm:fn}, we obtain that the estimated graph $\hat{G}$ in the iteration $\ell = |\adj_i(G)|$ satisfies $\adj_i(G) \subseteq \adj_i(\hat{G})$ and as a consequence the edge $(i,j)$ will be selected at Step~5 of Algorithm~\ref{alg:mtp2} at iteration $\ell = |\adj_i(G)|$. Then by choosing the $S$ at Step~7 to be $S_{ij}$ and using that we are on the event $\mathcal{A}$, we obtain that there exists a node $k$ such that $\hat{\rho}_{ij \mid S \cup \{k\}} \leq 0$. As a consequence, the edge $(i,j)$ will be deleted at Step~8. This contradicts the fact that the edge $(i,j)$ exists in the final output, which completes the proof.
\end{proof}

\section{Proof of Theorem~\ref{thm:fn}}

\begin{lemma}\label{lem:rho}
Consider a Gaussian random vector $X = (X_1, \cdots, X_p)^T$ that follows an MTP$_2$ distribution. Then for any $i,j \in [p]$ and any $S \subseteq [p] \setminus \{i,j\}$, it holds that $\rho_{ij \mid S} \geq \rho_{ij \mid [p] \setminus\{i,j\}}$.
\end{lemma}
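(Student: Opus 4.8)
The plan is to prove the inequality by enlarging the conditioning set one node at a time, so it suffices to establish the one-step statement: for every $S\subseteq[p]\setminus\{i,j\}$ and every $k\notin S\cup\{i,j\}$ one has $\rho_{ij\mid S}\geq\rho_{ij\mid S\cup\{k\}}$. Writing $[p]\setminus\{i,j\}=S\cup\{k_1,\dots,k_m\}$ and applying this repeatedly gives a chain $\rho_{ij\mid S}\geq\rho_{ij\mid S\cup\{k_1\}}\geq\cdots\geq\rho_{ij\mid[p]\setminus\{i,j\}}$, which is the claim. The structural input making the one-step bound work is that MTP$_2$ is preserved under marginalization (equivalently, that every principal submatrix of the inverse $M$-matrix $\Sigma$ is again an inverse $M$-matrix), so every subvector $X_{S\cup\{i,j,k\}}$ is Gaussian and MTP$_2$ and hence has a precision matrix whose off-diagonal entries are nonpositive and whose diagonal entries are positive.

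For the one-step inequality, set $M:=S\cup\{i,j\}$, and let $\Theta':=(\Sigma_{M\cup\{k\},\,M\cup\{k\}})^{-1}$ and $\Theta'':=(\Sigma_{M,M})^{-1}$. By the partial-correlation formula, $\rho_{ij\mid S\cup\{k\}}=-\Theta'_{ij}/\sqrt{\Theta'_{ii}\Theta'_{jj}}$ and $\rho_{ij\mid S}=-\Theta''_{ij}/\sqrt{\Theta''_{ii}\Theta''_{jj}}$. The Schur-complement identity for marginalizing out node $k$ reads
\begin{equation*}
\Theta'' \;=\; \Theta'_{M,M} \;-\; \frac{1}{\Theta'_{kk}}\,\Theta'_{M,k}\,\Theta'_{k,M}.
\end{equation*}
Since $\Theta'$ is an $M$-matrix, $\Theta'_{kk}>0$ and $\Theta'_{ak}\leq 0$ for every $a\in M$ (these are off-diagonal entries because $k\notin M$), so the rank-one correction $\frac{1}{\Theta'_{kk}}\Theta'_{M,k}\Theta'_{k,M}$ has all entries nonnegative. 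In particular $\Theta''_{ij}=\Theta'_{ij}-\Theta'_{ik}\Theta'_{kj}/\Theta'_{kk}\leq\Theta'_{ij}\leq0$, and $0<\Theta''_{ii}=\Theta'_{ii}-(\Theta'_{ik})^2/\Theta'_{kk}\leq\Theta'_{ii}$ (positivity because $\Theta''$ is a genuine precision matrix), and likewise $0<\Theta''_{jj}\leq\Theta'_{jj}$.

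Combining these, $-\Theta''_{ij}\geq-\Theta'_{ij}\geq0$ while $0<\sqrt{\Theta''_{ii}\Theta''_{jj}}\leq\sqrt{\Theta'_{ii}\Theta'_{jj}}$, so dividing the nonnegative numerator by the smaller positive denominator only increases it:
\begin{equation*}
\rho_{ij\mid S} \;=\; \frac{-\Theta''_{ij}}{\sqrt{\Theta''_{ii}\Theta''_{jj}}} \;\geq\; \frac{-\Theta'_{ij}}{\sqrt{\Theta''_{ii}\Theta''_{jj}}} \;\geq\; \frac{-\Theta'_{ij}}{\sqrt{\Theta'_{ii}\Theta'_{jj}}} \;=\; \rho_{ij\mid S\cup\{k\}}.
\end{equation*}
I expect the only genuine subtlety to be the structural fact invoked at the outset — that marginals of Gaussian MTP$_2$ vectors are again MTP$_2$, equivalently that the relevant principal submatrix inverse $\Theta'$ still has nonpositive off-diagonals; this is classical (closure of MTP$_2$ under marginalization, or the Schur-complement characterization of $M$-matrices), and everything else is the sign bookkeeping above. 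One could alternatively skip the induction and run the same Schur-complement computation once, marginalizing the whole block $[p]\setminus M$ simultaneously and using that $\Theta_{[p]\setminus M,\,[p]\setminus M}$ is an $M$-matrix and therefore has an entrywise-nonnegative inverse.
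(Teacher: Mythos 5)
Your proof is correct and rests on essentially the same argument as the paper's: the Schur-complement identity for marginalization plus the entrywise sign facts for $M$-matrices, which show that the numerator $-\Theta_{ij}$ of the partial correlation can only grow and the denominator $\sqrt{\Theta_{ii}\Theta_{jj}}$ can only shrink as the conditioning set enlarges. The only cosmetic difference is that you peel off one node at a time by induction, whereas the paper marginalizes the entire block $[p]\setminus(S\cup\{i,j\})$ in a single step and compares $(\Sigma_{M,M})^{-1}$ directly to $\Theta_{M,M}$ --- exactly the alternative you note in your closing remark.
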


\begin{proof}
For $\rho_{ij \mid S}$, if we let $M = S_{i,j}$, we have
\begin{align*}
    \rho_{ij \mid S} = - \frac{((\Sigma_M)^{-1})_{i_M, j_M}}{\sqrt{((\Sigma_M)^{-1})_{i_M, i_M} ((\Sigma_M)^{-1})_{j_M, j_M}}}.
\end{align*}
Using that the precision matrix $\Theta$ is an M-matrix, it follows from basic calculations using Schur complements that $((\Sigma_M)^{-1})_{i_M, i_M} \leq \Theta_{ii}$, $((\Sigma_M)^{-1})_{j_M, j_M} \leq \Theta_{jj}$ and $((\Sigma_M)^{-1})_{i_M, j_M} \leq \Theta_{ij} \leq 0$. By combining this with the fact that $\rho_{ij \mid [p] \setminus \{i,j\}} = -\frac{\Theta_{ij}}{\sqrt{\Theta_{ii}\Theta_{jj}}}$, we obtain the lemma.
\end{proof}

With this, we can now provide the proof of Theorem~\ref{thm:fn}.

\begin{proof}
For any edge $(i,j) \in G$ and any conditioning set $S \subseteq [p] \setminus \{i,j\}$ with $|S| \leq d + 2$, by using the same decomposition as in~\eqref{eq:fn2}, we can decompose the random variable $\hat{\rho}_{ij \mid S}$ as
\begin{align*}
\hat{\rho}_{ij \mid S} = \rho_{ij \mid S} + \ell_{ij \mid S} + r_{ij \mid S},
\end{align*}
where the random variable $\ell_{ij \mid S}$ is the first order approximation of $\hat{\rho}_{ij \mid S} - \rho_{ij \mid S}$ and $r_{ij \mid S}$ is the residual. It follows from Lemma~\ref{lem:concentration} and the proof of Lemma~\ref{lem:residual} that there exists some positive constant $\tau$ such that with probability at least $1 - p^{-(\tau + d + 4)}$,
\begin{align*}
|\hat\rho_{ij \mid S} - \rho_{ij \mid S}| \leq C_1 \sqrt{ (\tau + d + 4)\frac{\log p}{N^\gamma}},
\end{align*}
where $C_1$ is some positive constant that depends on $\sigma_{\min}, \sigma_{\max}$ and $d$. By further taking union bound over all $(i,j) \in G$ and all $S \subseteq [p] \setminus \{i,j\}$ with $|S| \leq d + 2$, it follows that
\begin{align*}
\Pr\Bigg\{\forall (i,j) \in G, \; \forall S \subseteq [p] \setminus \{i,j\} \;\textrm{with}\; |S| \leq d + 2, |\hat\rho_{ij \mid S} - \rho_{ij \mid S}| \leq C_1 \sqrt{(\tau + d + 4) \frac{\log p}{N^\gamma}}\Bigg\} \\
\geq 1 - p^{-\tau}.
\end{align*}
As a consequence, by assuming that $c_\rho$ in Condition~\ref{cd:signal} is sufficiently large such that $c_\rho > C_1 \sqrt{d + 4}$ and choosing $\tau$ such that $\tau < \big(\frac{c_\rho}{C_1}\big)^2 - d - 4$, it follows from Lemma~\ref{lem:rho} that with probability at least $1 - p^{-\tau}$, $\hat\rho_{ij \mid S} > 0$ for all the $(i,j,S)$'s where $(i,j) \in G$ and $|S|\leq d + 2$. Hence, we obtain that the edges $(i,j) \in G$ will not be deleted by Algorithm~\ref{alg:mtp2}, which completes the proof.
\end{proof}

\section{Additional comments on empirical evaluation}

\subsection{Stability selection}
% As all of the following algorithms contain a tuning parameter, there is an important question of how to choose the appropriate tuning parameter for a particular algorithm. Because we are interested in \emph{structure recovery} of the Gaussian graphical model, a standard technique for this scenario is known as \emph{stability selection} \cite{MB10}, where an algorithm with a tuning parameter is run multiple times on different subsamples of the data for various tuning parameters. A variable is selected if there exists a tuning parameter for which it is selected often enough (in our case we use the threshold $\pi = 0.8$, meaning a variable must be present in at least $80\%$ of trials for a given tuning parameter). It is important to note that stability selection is \emph{better} than simply choosing the best tuning parameter for a given algorithm, as it is able to combine information across various tuning parameters where appropriate and adapt to different settings. For all of the state of the art algorithms we use for comparison, the final output is determined by performing stability selection with the algorithm.

\paragraph{Overview of stability selection: } Stability selection~\citep{MB06} is a well-known technique for enhancing existing variable selection algorithms with tuning parameters. Stability selection works by taking an existing algorithm with a tuning parameter and running it multiple times on different subsamples of the data with various reasonable values for the tuning parameter. A variable is selected if there exists a tuning parameter for which it is selected often enough (in our case we use the threshold $\pi = 0.8$, meaning a variable must be present in at least $80\%$ of trials for a given tuning parameter). Because for each tuning parameter, the algorithm is run many times on different subsamples of the data, stability selection is very computationally expensive. It is important to note that stability selection is \emph{better} than simply choosing the best tuning parameter for a given algorithm, as it is able to combine information across various tuning parameters where appropriate and adapt to different settings. 

\paragraph{The advantages of stability selection: } As can be seen from Figure~\ref{fig:mcc}(c), the purple line corresponds to the \emph{SH} algorithm with stability selection and the pink line corresponds to the \emph{SH} algorithm where the \emph{best} tuning parameter is chosen for each different $N$ (i.e. the $y$-axis contains the \emph{best} MCC across \emph{all} tuning parameters). Note that the pink line is not a realistic scenario, as in a real-world application we would not have access to the evaluation metric on the test dataset as we do in this simulated example. However this example is instructive in showing that \emph{even when} a particular algorithm is evaluated with the best possible tuning parameter, stability selection is able to outperform it, showing that stability selection truly offers a tremendous advantage for the performance of algorithms with tuning parameters. Thus it is remarkable that our algorithm with theoretically optimal $\gamma$ is able to compete with other algorithms using stability selection.

\paragraph{Variation of $\gamma$ and our algorithm with stability selection:} It is also worth noting that although our algorithm doesn't have a ``tuning parameter" in a traditional sense (i.e. our consistency guarantees are valid for all $\gamma \in (0.75,1)$), it is still possible to perform stability selection with our algorithm by using various choices of $\gamma$ in the valid range. In particular, we see from Figure~\ref{fig:mcc}(c) that our algorithm with $\gamma = 0.85$ out-performs the theoretically ``optimal" value of $\gamma = 7/9$. Thus in practice, because different values of $\gamma$ lead to different performance (and in some cases better performance than the theoretically optimal value), our algorithm would likely be improved by performing stability selection. This would likely offer an improvement in performance for our algorithm at the expense of higher computational costs. Although it is worth noting that in our experiments our algorithm \emph{without} stability selection performed quite competitively.

\subsection{FPR and TPR}

In Figures~\ref{fig:tpr} and~\ref{fig:fpr} we report performance of various methods based on the false positive rate (FPR) and true positive rate (TPR) respectively. From these figures we can get similar conclusion as using the MCC measure. In particular, it is important to note that although the TPR of CMIT is higher than our algorithm across all simulation set ups, its FPR is also high, which makes the overall performance less compelling than our algorithm. The performance of TIGER is worse than our method in terms of both TPR and FPR.

\begin{figure*}[!t]
	\centering
	\subcaptionbox{Random graphs}{\includegraphics[width=0.33\textwidth]{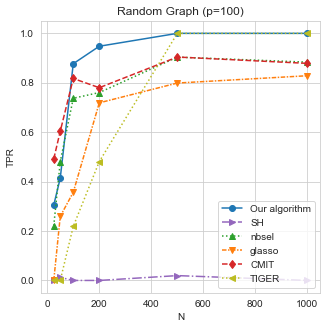}}%
	%\hfill % <-- Seperation
	\subcaptionbox{Chain graphs}{\includegraphics[width=0.33\textwidth]{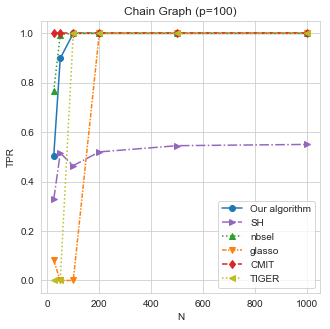}}%
	%\hfill
	\subcaptionbox{Grid graphs}{\includegraphics[width=0.33\textwidth]{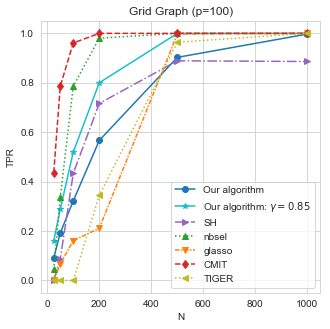}}%
	\caption{Comparison of different algorithms evaluated on TPR.}
		\label{fig:tpr}
	%	\vspace{-0.4cm}
\end{figure*}

\begin{figure*}[!t]
	\centering
	\subcaptionbox{Random graphs}{\includegraphics[width=0.33\textwidth]{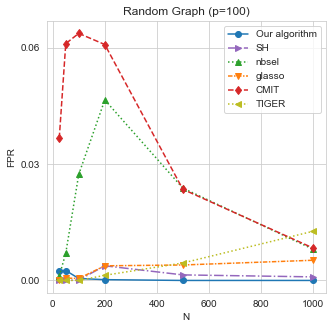}}%
	%\hfill % <-- Seperation
	\subcaptionbox{Chain graphs}{\includegraphics[width=0.33\textwidth]{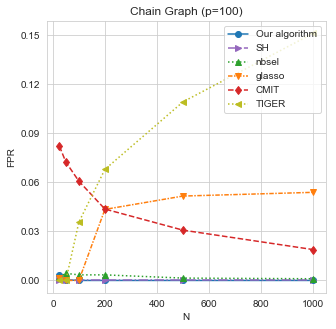}}%
	%\hfill
	\subcaptionbox{Grid graphs}{\includegraphics[width=0.33\textwidth]{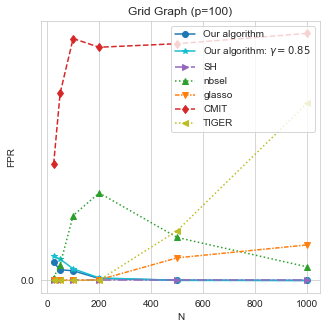}}%
	\caption{Comparison of different algorithms evaluated on FPR.}
		\label{fig:fpr}
	%	\vspace{-0.4cm}
\end{figure*}

\subsection{ROC Curves}

To generate the ROC curve for each setting of $N$, we sample $30$ different random graphs (\emph{random} as defined in Section~\ref{sec:eval})) and then draw $N$ samples from a multivariate normal with the resulting precision matrix. For each of the $30$ trials, we get an ROC curve for each algorithm based on the range of tuning parameters tried. To get a mean ROC curve for each algorithm, we average together the $30$ trials. The averaged ROC curves are shown Figure~\ref{fig:roc}(a) as well as Figure~\ref{fig:appendix_ROC}. The range of tuning parameters tried for each algorithm is listed below: 
\begin{itemize}
    \item \emph{SH: } $20$ equally spaced points for $q \in [0.00, 1.0]$. 
    \item \emph{glasso, nbsel: } $20$ equally spaced points in $\log$ space for $\lambda \in [10^{-6},10^{1.2}]$.
    \item \emph{CMIT: } For computational reasons, we always set $\eta=1$. However the tuning threshold $\lambda$ is varied as 20 equally spaced points in log space between $\lambda \in [10^{-4}, 10^{1.2}] $.
    \item \emph{Our algorithm: } We varied $\gamma \in [0.75, 0.95]$ for 10 equally spaced points in this interval.
\end{itemize}

\begin{figure}[!htb]
	\centering
	\subcaptionbox{ROC Curve $N=50$}{\includegraphics[width=0.33\textwidth]{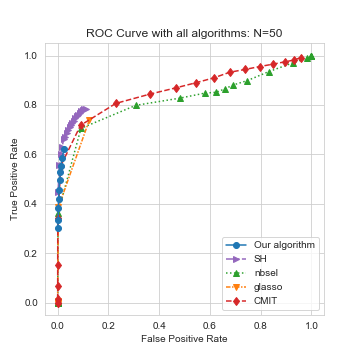}}%
	%\hfill % <-- Seperation
	\subcaptionbox{ROC Curve $N=100$}{\includegraphics[width=0.33\textwidth]{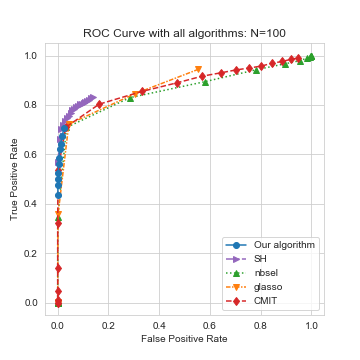}}%
	%\hfill
	\subcaptionbox{ROC Curve $N=200$}{\includegraphics[width=0.33\textwidth]{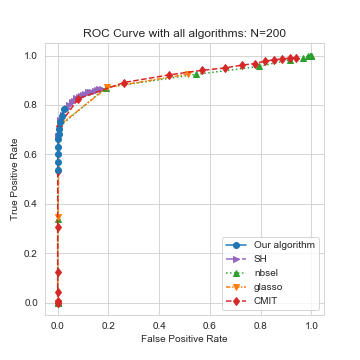}}%
	\caption{ROC curves for $N=50, 100, 200$ respectively averaged across $30$ trials of a random graph with $p=100$.}
	\label{fig:appendix_ROC}
\end{figure}

\subsection{Normalization of Tuning Parameters}

For each algorithm there is a \emph{reasonable} range of tuning parameters that one might consider while attempting to perform structure recovery for Gaussian graphical models with the particular algorithm \emph{in practice}. For \emph{glasso} and \emph{nbsel} it is well known that $\lambda = O\left( \sqrt{\frac{\log p}{N}} \right)$ is theoretically optimal~\citep{FHT08, MB06}. For all of the experiments shown in Figure~\ref{fig:roc}, we have that $p=100$ and $N=500$, giving $\sqrt{\frac{\log p}{N}} \approx 0.1$. To test the sensitivity of these algorithms' performance to choice of $\lambda$ close to this optimal quantity, we let the minimum and maximum $\lambda$ for both of these algorithms be a factor of $5$ within $0.1$. Thus, $\lambda_{\min(\text{glasso})} =\lambda_{\min(\text{nbsel} )} = 0.02$ and $\lambda_{\max(\text{glasso})} =\lambda_{\max(\text{nbsel} )} = 0.5$. We ran both algorithms with a variety of tuning parameters in this range and mapped the tuning parameters linearly to $[0, 1]$ so that $0.02$ is mapped to $0$ and $0.5$ is mapped to $1$ in the normalized tuning parameter $x$-axis in Figures~\ref{fig:roc}(b) and (c). For \emph{CMIT}, the threshold is also optimal for $ O\left( \sqrt{\frac{\log p}{N}} \right)$, so we chose $\eta = 1$ for computational reasons and let the threshold vary similarly as \emph{glasso} and \emph{nbsel} and be mapped to $[0, 1]$ similarly for normalization. 

For \emph{SH}, we let the threshold $q \in [0.7, 1.]$ as that is the range of threshold quantiles that the authors used in their paper~\citep{SH15}. 
%These quantities for the threshold $q$ seemed reasonable, as the authors of the method had decided to use these quantities as well. 
Once again, we performed a linear transformation such that the interval of tuning parameters gets mapped to the unit interval. 

For our algorithm, we let $\gamma \in [0.75, 0.95]$ and also mapped this interval to $[0,1]$ for normalizing the $\gamma$ ``tuning parameter". We decided this was an appropriate range for $\gamma$ since the Algorithm is consistent for $\gamma \in (0.75, 1)$. We make a minor note that in our mapping, we let smaller values of $\gamma$ correspond to higher values of the normalized tuning parameter (still a linear mapping, simply a reflection of the $x$-axis) since as $\gamma$ decreases, it performs similarly to providing more regularization since more edges are removed. In general, an increase in the normalized tuning parameter corresponds to more regularization.

Throughout, we wanted to use a reasonable range of tuning parameters for all algorithms to map onto the unit interval after normalization, so that we could have a fair comparison of the sensitivity of different algorithms' performance to their respective choice of tuning parameters. %Of course, if we chose unreasonable ranges (either too small or large) for the tuning parameters we tried for a given algorithm, it would provide an unfair comparison between algorithms. However, since we were careful in choosing ranges for the algorithms, we believe that we have a fair comparison for all of the algorithms in consideration.

\section{Real data analysis}

In this analysis, we consider the following metric that evaluates the community structure of a graph.

\noindent \emph{Modularity.} Given an estimated graph $G := ([p], E)$ with vertex set $[p]$ and edge set $E$, let $A$ denote the adjacency matrix of $G$. For each stock $j$ let $c_j$ denote the sector to which stock $j$ belongs and let $k_j$ denote the number of neighbors of stock $j$ in $G$. Then the \emph{modularity coefficient} $Q$ is given by%
$$ Q = \frac{1}{2 |E|} \sum_{i, j \in [p]} \left( A_{ij} - \frac{k_ik_j}{2|E|} \right) \delta(c_i, c_j),$$%
where $\delta(\cdot,\cdot)$ denotes the $\delta$-function with $\delta(i, j) = 1$ if $i = j$ and 0 otherwise. 

The modularity coefficient measures the difference between the fraction of edges in the estimated graph that are within a sector as compared to the fraction that would be expected from a random graph. A high coefficient $Q$ means that stocks from the same sector are more likely to be grouped together in the estimated graph, while a low $Q$ means that the community structure of the estimated graph does not deviate significantly from that of a random graph. Table~1 in the main paper shows the modularity scores of the graphs estimated from the various methods; our method using fixed $\gamma = 7/9$ outperforms all the other methods. %, even though all other methods are using some form of tuning parameter selection and ours is using fixed $\gamma = 7/9$.}

%\begin{table*}[t!]
%	\centering
%	\begin{tabular}{ c c }
%		Method & Modularity Coefficient\\ \hline \hline
%		Our Algorithm $(\gamma = 7./9.$) & 0.482 \\ \hline
%		Slawski-Hein with stability selection & 0.418 \\ \hline
%		%[0.7, 0.8, 0.9, 0.95, 0.99, 1]
%		Neighborhood selection with stability selection & 0.350\\ \hline
%		%[0.0001, 0.0005, 0.001, 0.005, 0.01, 0.05, 0.1]
%		Graphical Lasso with stability selection & 0. \\ \hline
%		%[0.0001, 0.0005, 0.001, 0.005, 0.01, 0.05, 0.1]
%		Cross-validated graphical lasso & 0.253\\ \hline
%		%20 alphas, cv=3
%		\emph{CMIT} with stability selection & -0.0088  \\ \hline
%		\emph{CMIT} with best hyperparameter & -0.0085 \\ \hline
%		TIGER & -0.5 \\ 
%	\end{tabular}
%	\vspace{0.3cm}
%	\caption{Modularity scores of the estimated graphs; higher score indicates better clustering performance. For our algorithm we used the theoretically optimal value of $\gamma=7/9$.}\label{tab:modularity}
%	%\vspace{-0.3cm}
%\end{table*}

\end{document}